\begin{document}

\theoremstyle{plain}
\newtheorem{theorem}{Theorem}
\newtheorem{lemma}[theorem]{Lemma}
\newtheorem{conjecture}[theorem]{Conjecture}
\newtheorem{proposition}[theorem]{Proposition}
\newtheorem{corollary}[theorem]{Corollary}
\newtheorem*{remark}{Remark}

\theoremstyle{plain}
\newtheorem{definition}{Definition}
\newtheorem{exmp}{Example}
\newtheorem{example}{Example}[section]

\title{Simulating Symmetric Time Evolution With Local Operations}
\author{Borzu Toloui}
\email{borzumehr@gmail.com}
\affiliation {Institute for Quantum Information Science, University of Calgary, Alberta, T2N 1N4, Canada.}
\author{Gilad Gour}
\email{gour@ucalgary.ca}
\affiliation {Institute for Quantum Information Science, University of Calgary, Alberta, T2N 1N4, Canada.}
\affiliation{Department of Mathematics and Statistics, University of Calgary, Alberta T2N 1N4, Canada.}

\begin{abstract}
In closed systems, dynamical symmetries lead to conservation laws. However, conservation laws are not applicable to open systems that undergo irreversible transformations. More general selection rules are needed to determine whether, given two states,  the transition from one state to the other is possible.
The usual approach to the problem of finding such rules relies heavily on group theory and involves a detailed study of the structure of the respective symmetry group.  
 We approach the problem in a completely new way by using entanglement  to investigate the asymmetry properties of quantum states.   
To this end, we embed  the space state of the system in a tensor product Hilbert space, whereby symmetric transformations between two states are replaced with local operations on their bipartite images.  
 The embedding enables us to use the well-studied theory of entanglement to investigate the consequences of dynamic symmetries.   
 Moreover, under reversible transformations, the entanglement of the bipartite image states becomes a conserved quantity. These entanglement-based conserved quantities are new and  different from the conserved quantities based on expectation values of the Hamiltonian symmetry generators.  Our method is not group-specific and applies to general symmetries associated with any semi-simple Lie group. 
\end{abstract}

\maketitle
\section{Introduction}
\label{sec:intro}

The evolution of most quantum systems is too complicated to be solved analytically or even simulated numerically in an efficient way, at least in the absence of powerful quantum computers~\cite{F82, NBHS11}.
Many  realistic situations involve either open dynamical systems, or closed systems with Hamiltonians that contain numerous  parameters,  so that determining how they vary with time is at the present time not computationally tractable~\cite{Z98, W11}. In all such cases, symmetry-based approaches are powerful substitutes for actual detailed analysis of the complex dynamics involved. Noether's theorem plays a central role in the study of dynamical symmetries of closed systems in classical mechanics~\cite{NT71}. The theorem, as well as its quantum mechanical  counterparts,  state that a Hamiltonian satisfying some  symmetry is always accompanied by a corresponding conservation law~\cite{W54}.  

On the other hand, open systems are more general and  far more ubiquitous than closed systems. Their symmetric time evolutions are expressed by covariant completely positive (CP) maps that in general can be very different from unitary evolutions governed by  symmetry preserving  Hamiltonians.  Hence, no conserved quantities are associated with symmetric dynamics of open systems, and  thus, consequences of dynamical symmetries cannot always be reduced to selection rules based on conservation laws~\cite{MS11a}.  In fact, it was recently shown that even for closed systems, conservation laws given by Noether's theorem do not capture all the consequences of the symmetry in question~\cite{MS11b, MS11c}. It is therefore necessary to look beyond conservation laws in order to  determine how states evolve under symmetric dynamics. 

A system that has a certain symmetry cannot lose the symmetry as it evolves by a Hamiltonian or a master equation that preserve that type of symmetry. More generally, a state cannot become more asymmetric as it undergoes a symmetric time evolution. Comparing the asymmetry properties of two states can therefore be of great use in establishing whether one state can evolve to another under symmetric conditions.  In other  words, when a symmetry is imposed on the dynamics, the \emph{asymmetry} of quantum states becomes a resource~\cite{MS11a}. 
 
In this paper we show that covariant CP-maps can be `simulated' by a restricted subset of local operations and classical communications (LOCC).  
The key idea is to embed the system's Hilbert space~$\mathscr{H}$ within a larger tensor product space~$\mathscr{H}_{A}\otimes \mathscr{H}_{B}$. The embedding is done with an isometry~  
\begin{equation}\label{iso}
\mathscr{H}  \xrightarrow{\text{iso}} \mathscr{W} \subseteq \mathscr{H}_{A}\otimes \mathscr{H}_{B} ,  
\end{equation}
that has the following  properties. 
First, the isometry maps symmetric states to separable states.  
Furthermore, consider two states~$\rho$ and~$\sigma$ that act on~$\mathscr{H}$, and their corresponding bipartite 
image-states~$\tilde{\rho}_{AB}$
and $\tilde{\sigma}_{AB}$ that act on the image subspace~$\mathscr{W}$. If there exists a covariant transformation~$\mathcal{E}_{\text{cov}}$ that maps $\rho$ to $\sigma$, i.e.~$\sigma\equiv\mathcal{E}_{\text{cov}}(\rho)$, then there must also exist a \emph{local}
transformation~$\tilde{\mathcal{E}}_{\text{local}}$ that maps $\tilde{\rho}_{AB}$ to $\tilde{\sigma}_{AB}$, i.e.~$\tilde{\sigma}_{AB}=\tilde{\mathcal{E}}_{\text{local}}(\tilde{\rho}_{AB})$~(Figure~\ref{fig:fig1}). 
 In this sense the local operator~$\tilde{\mathcal{E}}_{\text{local}}$ simulates the covariant map~$\mathcal{E}_{\text{cov}}$.  
  
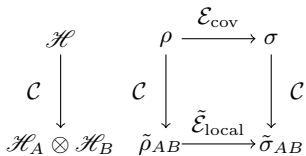
\begin{figure}[h]
 \begin{tikzpicture}[ scale=0.7]
 \node (a) at (0,0) {$\rho$};
 \node (b) at (2,0) {$\sigma$};
 \node (c) at (-0.1,-2) {$\tilde{\rho}_{AB}$};
 \node (d) at (2.2,-2) {$\tilde{\sigma}_{AB}$};
 \node (e) at (-2,0) {$\mathscr{H}$};
 \node (f) at (-2,-2) {$\mathscr{H}_{A} \otimes \mathscr{H}_{B}$};
 \node (1) at (1,0.4) {$\mathcal{E}_{\text{cov}}$};
 \node (2) at (1,-1.6) {$\tilde{\mathcal{E}}_{\text{local}}$};
 \node (3) at (-0.5,-1) {$\mathcal{C}$};
 \node (4) at (2.5,-1) {$\mathcal{C}$};
 \node (5) at (-2.5,-1) {$\mathcal{C}$ };
 \draw[->] (0.3,0)--(1.7,0);
 \draw[->] (0,-0.3)--(0,-1.7);
 \draw[->] (2,-0.3)--(2,-1.7);
 \draw[->] (0.3,-2)--(1.7,-2);
 \draw[->] (-2,-0.3)--(-2, -1.7);
\end{tikzpicture}
 \caption{Simulating a covariant transformation~$\mathcal{E}_{\text{cov}}$ by a LOCC transformation~$\tilde{\mathcal{E}}_{\text{local}}$.}
 \label{fig:fig1} 
\end{figure}

We show here that such isometries can be found for all covariant CP-maps that are associated with compact symmetry Lie groups. Moreover, for any asymmetric state, we show that there exists an isometry that maps it to an entangled state. Hence, the entanglement in the image space capture all the asymmetry properties of the state. 
Our results follow from an application of the Wigner-Eckart theorem, generalized to all semi-simple groups~\cite{CSM10},   that determines the general form of the Kraus operators of covariant transformations~\cite{GS08}.  

The entanglement of the image state plays a somewhat similar role to biomarkers that are employed in biology in order to trace a biological process.  Hence, the study of the evolution of entanglement governed by the LOCC map $\mathcal{E}_{\text{local}}$ opens a new window to explore symmetric dynamics.  
In particular, it shows that the resource theory associated with the asymmetry of quantum states~\cite{MS11a,GS08} is equivalent to the resource theory of entanglement under a restricted subset of LOCC transformations.

 A comprehensive collection of theorems and theoretical tools  has been developed to study  quantum entanglement for more than a decade~\cite{VPRK97, PV07, HHHH09}.  
The equivalence between asymmetry and entanglement resources allows us to take advantage of the repertoire of tools of entanglement theory in order to study the asymmetry properties of quantum states.   
In particular, the established equivalence allows us to use any entanglement monotone and construct a corresponding `asymmetry monotone'~\footnote{In~\cite{MS11a, MS11b} it was called an `asymmetry measure' and in~\cite{GS08} it was called a `frameness monotone'. Here we use the terminology of asymmetry monotones rather than asymmetry measures since these functions do not necessarily \emph{measure} asymmetry, but can sometimes only \emph{detect} it. To see it, consider for example the asymmetry monotone that is equal to 7.2 for asymmetric states, and zero for symmetric states.
Clearly, this monotone does \emph{not} measure asymmetry, only detects asymmetry.}.
An asymmetry monotone, as the name suggests,  is a real function defined on the set of quantum states such that its magnitude  changes monotonically (i.e. non-increasing) during a symmetric evolution.  In the case of reversible symmetric transformations,  asymmetry monotones of course remain conserved. They can thus be regarded as generalizations of conserved quantities.   
Taking asymmetry monotones into account allows us to  rule out  classes of transformations that cannot be ruled out based on conservation laws alone. %
  
A  state that lacks a particular symmetry  encodes information about the physical degrees of freedom that transform under that  symmetry. In contrast, a symmetric state  does not carry any such information.  For example,  the state of electrons with non-zero angular momentum along a particular direction  in space is not symmetric under rotations and consequently encodes some information about that direction, whereas electrons in a rotationally invariant state of zero total angular momentum contain no information about any preferred direction. 

So far,  the study of asymmetry properties of quantum systems has mostly been  focused on pure states. For example, interconversion of pure states under specific symmetry groups has been  studied~\cite{BRS07, GS08, TGS11} and a general classification of pure-state asymmetry properties for arbitrary finite or compact Lie groups has been developed~\cite{MS11b}.  
Prior to the present result,  little was known about the general properties of mixed-state asymmetry, and,  with few but important exceptions like the $G$-asymmetry~\cite{VAWJ08} (also known as the relative entropy of frameness~\cite{GMS09}),  asymmetry monotone functions of mixed-states were not identified for symmetries associated with general groups. 

  Our work  introduces a wide class of asymmetry monotones, defined for all states, pure or mixed. Some of the asymmetry monotones we construct can only be defined in terms of the entanglement of a bipartite system.  A case in point is the negativity measure of entanglement~\cite{VW02}.  Negativity is specially interesting as it provides us with an easily calculable asymmetry monotone for all states and for all types of symmetry. 

Although monotones are extremely useful tools in resource theories~\cite{V00, PV07}, the conditions for the symmetric evolution of states  need not always come in the form of asymmetry monotones. We derive a separate necessary condition for the existence of a covariant transformation from one state to another. However, the condition is such that it cannot be expressed in terms of asymmetry monotones, though for reversible symmetric transformations our necessary condition leads to new conserved quantities.  We arrive at this condition by a new isometry embedding of the system's Hilbert space into a different tensor product structure.  This additional result shows that the isometry in Eq.~(\ref{iso}) can be useful even if it does not simulate covariant transformations with LOCC. 

The paper is organized as follows: in section II, we go over the preliminaries of the asymmetry resource theory, as well as precise definitions of asymmetry monotones. We present our main result in section III  starting with simply reducible compact Lie groups.   In Appendix A, we generalize the main result to general compact Lie groups.  
 Section IV focuses on specific examples of asymmetry monotones and how they compare with their entanglement counterparts.
 In section V, we introduce a new isometry
that in general does not simulate covariant maps with LOCC, but nonetheless leads to new results on time symmetric evolutions.   
Finally we discuss our results and conclusions in section VI. 
Appendix B, contains a special form of the general results for case of  Abelian groups.

\section{ Notations and Preliminaries}
\label{sec:prem}
In this section, we briefly discuss few key elements of the resource theory of quantum asymmetry that we will be using in the rest of the paper.  
In particular, we go over G-covariant maps, irreducible tensor operators, the Wigner-Eckart theorem, and asymmetry monotones.
For a more detailed review on asymmetry, and its relation to reference frames, and super-selection rules see~\cite{MS11a,GS08,BRS07}. 
   
\subsection{G-covariant Transformations}
\label{subsec:G-cov}
 Let~$\mathcal{B}(\mathscr{H})$ denotes the set of bounded operators over~$\mathscr{H}$. 
Consider a completely positive (CP) map~$\mathcal{E}: \mathcal{B}(\mathscr{H}) \rightarrow \mathcal{ B }(\mathscr{H})$ that takes density matrices to density matrices. Let~$G$ be a group of transformations, and define the map~$\mathcal{U}(g)$  as
\begin{align}
\label{def:U}
   \mathcal{U}(g)[\bullet]:=U(g) (\bullet) U^\dagger(g),
\end{align}      
 where~$U: G \rightarrow \mathcal{B}(\mathscr{H}): g\mapsto U(g)$ is a representation of the group~$G$. In this paper we only consider   compact semi-simple Lie   groups with fully reducible unitary representations\footnote{Our method relies heavily on the general form of the Wigner-Eckart  theorem for semi-simple Lie groups. A much more complicated form of the Wigner-Eckart theorem exists for finite groups, sometimes known as  the Koster-Wigner-Eckart theorem~\cite{Koster58}. It might still be possible that this finite counterpart of the Wigner-Eckart theorem can lead to results analogous to ours, but we do not consider those cases in this paper.}.    A semi-simple Lie group is a Lie group whose algebra is semi-simple~\cite{W74}  . We will also assume that the representation of the group comes with a group independent (Haar) measure. 
 Compact Lie groups all have unitary representations with Haar measures.

 We say that the mapping~$\mathcal{E}$ is symmetric with respect to~$G$, or equivalently, that~$\mathcal{E}$ is $G$-covariant,  if  for all~$\rho$ and for all~$g \in G$, 
 \begin{align}
 \label{def:cov}
 \mathcal{E}\circ\mathcal{U}(g) [\rho] = \mathcal{U}(g)\circ\mathcal{E} [\rho]. 
 \end{align}
 In particular, if the CP-map consists of a single unitary~$\mathcal{E}[\bullet]=V (\bullet) V^\dagger$, then the condition of~$G$-covariance in Eq.~(\ref{def:cov}) becomes
 \begin{align}
 [U(g),V]=0, \; \forall g \in G. 
 \end{align}
 The unitary~$V$ is called~$G$-invariant in this case.     
 Similarly, a symmetric state~$\rho$ is any state that remains invariant under the application of the group representation, also known as a $G$-invariant state, 
 \begin{align}
 [U(g), \rho]=0, \; \forall g \in G,  
 \end{align} 
 which is equivalent to 
 \begin{align}
  \forall g \in G,~U(g)\rho \:U^{\dag}(g)=\rho. 
\end{align}
 Consider the uniform average of the group action:
  \begin{align}
 \label{def:twirl}
 \mathcal{G} [\rho]:=\int \text{d}\mu_g\; \mathcal{U}(g)[\rho] ;, 
 \end{align}
where~$\text{d}\mu_g$ denotes the group invariant (Haar) measure. In the case of discrete groups the integral is replaced with a sum, and the uniform measure with the inverse group size. The averaging superoperator~$\mathcal{G}$ in Eq.~(\ref{def:twirl}) is known as the~$G$-twirling operation.  It follows from the uniformity of the measure that twirled states are invariant under the action of any element of the group,~i.\ e.\ they are~$G$-invariant. In fact, it can be shown that every~$G$-invariant state can be expressed as the outcome of a twirling operation~\cite{BRS07}. 
  
\subsection{Irreducible Tensor Operators And The Wigner-Eckart Theorem}
\label{subsec:ito}
 
Let the irreps of the group~$G$ be labeled by the letter~$j$. In general~$j$ can be a short hand notation for a multiple of independent labels that together fully label the irreps.   As the irreps are unitary,~$j$ denotes the highest weight of each irrep and is a  vector of dimension~$\ell$,  where~$\ell$ is the rank of the associated algebra. Also let~$m$ label the basis vectors of the irrep, i.\ e.\ the basis vectors spanning the invariant subspace of the~$j$'th irrep. In fact, ~each $m$ denotes a weight of the irrep labeled by~$j$ and is thus, also a~$\ell$-dimensional vector in the weight space of the irreps.  

Also, let us decompose the Hilbert space as
\begin{align}
\label{eq:Hj}
\mathscr{H}=\bigoplus_{j, \lambda} \mathscr{H}_{j, \lambda}
\end{align}
where~$\mathscr{H}_{j, \lambda}$ carries an irrep labeled by~$j$, where  the range of~$j$ is assumed to be unbounded. The index~$\lambda$ labels the multiplicity of the irrep.   
With this decomposition of $\mathscr{H}$,
the~$G$-twirling of a state~$\rho$ has the form
\begin{align}
\label{eq:twirl2}
\mathcal{G}[\rho]= \sum_{j, \lambda} p_{j, \lambda} \Pi_{j, \lambda},
\end{align} 
 where~$\Pi_{j, \lambda}$ is the projection onto subspace~$\mathscr{H}_{j, \lambda}$ that carries the~$j$th irrep.

The definition of~$G$-covariance in Eq.~(\ref{def:cov}) is equivalent to 
\begin{equation}\label{gcov}
\mathcal{E}=\mathcal{U}(g)\circ \mathcal{E}\circ \mathcal{U}(g^{-1}), \;\; \forall g \in G.
\end{equation}
Clearly, if~$\left\{K_i\right\}$ is a set of Kraus operators of a~$G$-covariant CP-map $\mathcal{E}$, then, from Eq.~(\ref{gcov}), it follows that~$\left\{U(g) K_i U^\dagger(g)\right\}$ is also a set of Kraus operators for $\mathcal{E}$.   
Now, two operator sum representations of the same channel $\mathcal{E}$ are related by a unitary matrix. Therefore, it follows that  
\begin{align}
U(g) K_i U^\dagger(g) =\sum_{i'} u_{ii'}(g)\; K_{i'} 
\end{align}
where~$u_{ii'}(g)$ are the elements of a unitary matrix~$u(g)$. It was shown in~\cite{GS08} that if the~$\left\{K_i\right\}$ are linearly independent, then $u(g)$ is also a representation of the group~$G$. Furthermore, bringing the matrix~$u(g)$ to the block diagonal form,  ~
\begin{align}
u(g)=\bigoplus_{j, \lambda} u_{j, \lambda}(g)
\end{align}
    of the group's irreps,   simply amounts to a different unitary remixing of the Kraus operators, and is thus allowed. This, in turn, means that the Kraus operators of a~$G$-covariant CP-map can be grouped into subsets that mix only among themselves, each labeled by the irrep labels of the group.

Thus, every~$G$-covariant CP-map admits a Kraus decomposition labeled~$K_{j,m,\alpha}$, with~$\alpha$ being a multiplicity index, such that 
\begin{align}
\label{def:irtenop}
K_{j,m,\alpha}= \sum_{m'} u^{(j)}_{m,m'}(g)\; K_{j,m',\alpha}, \;\; \forall g\in G. 
\end{align} 
For each irrep label~$j$, Kraus operators of the set~$\left\{K_{j,m,\alpha}\right\}$ are called irreducible tensor operators of rank~$j$. 

A CP-map with a Kraus decomposition comprised of a set of irreducible tensor operators,  
\begin{align}
\label{def:Ej}
\mathcal{E}_{j,\alpha} (\bullet)= \sum_m K_{j,m,\alpha} \; (\bullet) \; K^\dagger_{j,m,\alpha}, 
\end{align}
is an irreducible~$G$-covariant operation. Every~$G$-covariant CP-map can be expressed as a sum of irreducible~$G$-covariant operations. 

\subsubsection{ The Wigner-Eckart Theorem }
\label{subsubsec:WE}
The Wigner-Eckart theorem determines the matrix elements of the irreducible tensor operators with respect to the~$SU(2)$ algebra, also known as spherical tensor operators~(for example see pp. 193-195 in~\cite{B98}).  In fact, the Wigner-Eckart theorem can be generalized and applied to any compact, semi-simple group and its associated Lie algebra~\cite{CSM10}. For simplicity of the exposition, we will first assume that the 
Kronecker product of the algebra associated with the group is simply reducible. That is, the coupling of two irreps has no outer multiplicities (i.e. multiplicities that arise due to coupling). We leave the generalization to all semi-simple compact groups to Appendix~\ref{A1}.   The Wigner-Eckart theorem then specifies the form of the  matrix elements of~$K_{J,M,\alpha}$ as we now discuss.    

Let ~$\left\{|j, \lambda ; m  \rangle \right\}$~  be the set of basis vectors spanning the Hilbert space $\mathscr{H}$.
Here~$m$ labels   the weights of    
the~$j$'th irrep,   as before,  and~$\lambda$  labels the multiplicity of the irrep. 
The Wigner-Eckart theorem states that the matrix elements of~$K_{j,m,\alpha}$ are given by: 
\begin{align}
\label{eq:WE}
\langle  j',\lambda'; m' |K_{J,M,\alpha}& |j, \lambda; m \rangle= \nonumber \\ 
&\left(
\begin{matrix}
  j & J \\
  m & M
 \end{matrix}
\right |
\left. 
 \begin{matrix}
 j'  \\
  m'
 \end{matrix}
 \right)
\: \langle j', \lambda' \parallel K_{J, \alpha} \parallel j, \lambda\rangle, 
\end{align}
where~$\langle j', \lambda' \parallel K_{J, \alpha} \parallel j, \lambda\rangle$ is the reduced matrix element independent of~$m$ and~$m'$, and~$\left(
\begin{matrix}
  j & J \\
  m & M
 \end{matrix}
\right |
\left. 
 \begin{matrix}
 j'  \\
  m'
 \end{matrix}
 \right)
$ are the (general) Clebsch-Gordan (CG) coefficients. 
  
\subsubsection{Clebsch-Gordan Coefficients}
\label{subsubsec:cg}
The generalized CG coupling coefficients~$
\left(
\begin{matrix}
  j_1 & j_2 \\
  m_1 & m_2
 \end{matrix}
\right |
\left. 
 \begin{matrix}
 j_3 \\
  m_3
 \end{matrix}
 \right)
$ relate the basis~$|j_1,m_1\rangle \otimes |j_2, m_2\rangle$  to the basis~$\left|j_3, m_3 ; (j_1, j_2)\right\rangle$  that reduces the Kronecker product of the two irreps,  
\begin{align}
\left|j_3, m_3; \right. &\left (j_1, j_2)\right\rangle=\nonumber \\
&\sum_{m_1, m_2}
\left(
\begin{matrix}
  j_1 & j_2 \\
  m_1 & m_2
 \end{matrix}
\right |
\left. 
 \begin{matrix}
 j_3 \\
  m_3
 \end{matrix}
 \right)
\; |j_1,m_1\rangle \otimes |j_2, m_2\rangle. 
\end{align}
 Here, we have dropped the multiplicity index~$\lambda$, as the CG-coefficients do not depend on the multiplicity. In the rest of the paper, we use~$|j; m\rangle$, or~$|j, \lambda; m\rangle$   instead of~$\left|j, m ; (j_1, j_2)\right\rangle$ or~$\left|j, \lambda, m ; (j_1, j_2)\right\rangle$ for brevity whenever the context  is clear.

\subsection{Monotones}
\label{subsec:mon}
Every restriction on quantum operations defines a resource theory, determining how quantum states that cannot be prepared under the restriction may be manipulated and used to circumvent the restriction. Here we discuss briefly how the resourcefulness of these quantum states is quantified.
We will focus on entanglement theory and the theory of asymmetry that is associated with a group~$G$ of transformations.
In entanglement theory, the quantum operations or CP-maps are confined to LOCC, and only separable states can be prepared by LOCC (assuming no access to previously existing entanglement).   In the resource theory of quantum asymmetry, the only allowed operations are~$G$-covariant CP maps, and the only states that can be prepared without any resources are~$G$-invariant states. 

A quantum state cannot turn into a stronger resource by the set of restricted (or allowed) operations. Therefore, the strength of the resource 
must be quantified by functions that do not increase under the set of allowed operations. Such functions are called monotones.
We now give a precise definition for monotones that we will use in the rest of the paper, and that apply to both entanglement and asymmetry. 

The most general quantum transformation converts an initial state $\rho$ into one of a set of possible final states, say $\sigma_{x}$, that occurs with probability $p_x$.
Such a general quantum transformation is described by a CP map~$\mathcal{E}: \mathcal{B}(\mathscr{H})\rightarrow \mathcal{B}(\mathscr{H})$ that is itself composed of a number of CP (in general trace decreasing) maps~$\{\mathcal{E}_x\}$, so that $\mathcal{E}=\sum_x\mathcal{E}_x$,  and
\begin{equation}
\label{eq:sigmax}
	\sigma_x:=\mathcal{E}_x[\rho]/p_x, 
\end{equation}
where the probability~$p_x=\text{Tr}\left(\mathcal{E}_x[\rho]\right)$. We say that $\mathcal{E}$ is $G$-symmetric if all
$\{\mathcal{E}_x\}$ are $G$-covariant.  

The ensemble of outcomes is written as $\left\{\sigma_x, p_x\right\}$. This ensemble can be equivalently expressed as a  density operator 
\begin{align}
\tilde{\sigma}:= \sum_x p_x\;\sigma_x \otimes |x\rangle\langle x|, 
\end{align}
 where $\{|x\rangle\}$ are a set of mutually orthogonal unit states. ~$\tilde{\sigma}$ can be prepared out of the ensemble~$\left\{\sigma_x, p_x\right\}$ by annexing an ancilla in the state labeled by the index~$x$, and then forgetting the value of~$x$. Reversely, the ensemble can always be reproduced from the density operator~$\tilde{\sigma}$ by performing the  measurement~$\mathcal{M}=\{|x\rangle\langle x|\}_x$.  
 \begin{definition}
\label{def:monotones}
Using the above notations, a function $A: \mathcal{B} \left(\mathscr{H}\right)\to\mathbb{R}^{+}$ is called an asymmetry (entanglement) monotone, 
if it satisfies
\begin{align}\label{mon}
 A(\rho)\geq A(\tilde{\sigma})  
\end{align}  
for all CP maps $\mathcal{E}$ that are $G$-  covariant  (LOCC).
\end{definition}

We can further classify the asymmetry (entanglement) monotones into another category: 

 \begin{definition}
\label{def:ensemble}
The asymmetry (entanglement) monotone $A: \mathcal{B} \left(\mathscr{H}\right)\to\mathbb{R}^{+}$ is called an \emph{ensemble} monotone
if it satisfies
\begin{align}
A(\rho)\geq\sum_x p_x A(\sigma_x),
\end{align}
for all CP maps $\mathcal{E}$ that are $G$-symmetric (LOCC).
\end{definition}
Note that the set of ensemble monotones is a strict subset of the monotones defined in Eq.~(\ref{mon}).
The most restrictive monotones are monotones that do not increase under any non-deterministic (trace-non-increasing) CP-map~$\mathcal{E}_x$.

\begin{definition}
\label{def:faithful}
An asymmetry (entanglement) monotone~$A: \mathcal{B} \left(\mathscr{H}\right)\to\mathbb{R}^{+}$ is faithful if
\begin{align}
A(\rho)= 0\; \text{ iff }\; \rho\; \text{is~$G$-invariant (separable).} 
\end{align}
\end{definition}

 We are now ready to present our main result that connects~$G$-covariant transformations to LOCC transformations and entanglement monotones to asymmetry monotones.

\section{Simulating~$G$-covariant transformations}
\label{sec:sim}
As discussed in section~\ref{sec:intro},  the central idea of the present paper is to embed the system's Hilbert space within a larger Hilbert space in such a way that the covariant transformations between original states map to LOCC transformations in the larger Hilbert space. We now proceed to make precise the concepts and procedures  involved. 
We use the notations introduced in sections~\ref{sec:intro} and~\ref{sec:prem}. 
\begin{definition}
\label{def:lociso}
A LOCC-simulating isometry is an isometry~$\mathcal{C}: \mathcal{B}\left(\mathscr{H}\right) \rightarrow \mathcal{B}\left(\mathscr{W}\right)$, with a bipartite image space $\mathscr{W}\subseteq \mathscr{H}_{A} \otimes \mathscr{H}_{B}$ (see Eq.~\ref{iso}),
that satisfies the following three conditions:\\
\textbf{(1)} For any $G$-covariant map, $\mathcal{E}_{\text{cov}}$, the map
$$\mathcal{C} \circ \mathcal{E}_{\text{cov}} \circ \mathcal{C}^{-1}\equiv\mathcal{E}_{\text{local}}:\;\mathcal{B}\left(\mathscr{W}\right)\to\mathcal{B}\left(\mathscr{W}\right)$$ is local; that is, $\mathcal{E}_{\text{local}}$ can be implemented by LOCC. \\
\textbf{(2)} If~$\rho$ is~$G$-invariant then~$\mathcal{C}(\rho)$ is separable. \\
\textbf{(3)} There exists an asymmetric state (i.e. non-$G$-invariant state)~$\sigma$ for which~$\mathcal{C}(\sigma)$  is entangled. 
\end{definition}

The third point excludes trivial isometries that map every state, whether $G$-invariant or not,  to a separable state. One example of such a trivial isometry is simply adding an ancilla state to every state~$\rho$, i.\ e., 
\begin{align}
\rho \mapsto \mathcal{C}(\rho):= \rho \otimes |0\rangle \langle 0|. 
\end{align}

Trivial isometries of this sort are of course always possible, but they differentiate neither between~$G$-invariant and non-invariant states,  nor between~$G$-covariant or non-covariant transformations.   
Thus, they tell us nothing about the states' asymmetry properties or about the conditions under which covariant transformations are possible.  
The other extreme, that of mapping every asymmetric state to an entangled state, although ideal, is not likely to always be possible.  The isometries that we consider here do not fall under either extreme. Nevertheless, we are able to
find a \emph{set} of isometries that is complete in the sense that for any asymmetric state there exists at least one isometry in the set that takes it to an entangled state.  In this sense, entanglement capture all aspects of asymmetry.
 
 \subsection{The Main Isometry}
 
The Wigner-Eckart theorem determines the matrix elements of an irreducible tensor operator, like the Kraus operators of~$G$-covariant transformations,  in the basis ~$|j,\lambda; m \rangle$ introduced in section~\ref{subsec:G-cov}~(see Eq.~(\ref{eq:WE})).   
An important consequence of the Wigner-Eckart theorem  is the existence of the so called selection rules.  The generalized CG coupling coefficients~$(j, m; J, M| j', m')$  are zero unless the weights~$m$,~$M$ and~$m'$ satisfy the relation,   
\begin{align}
\label{eq:m}
m+M=m'. 
\end{align} 
The matrix elements that do not satisfy Eq.~(\ref{eq:m}) must vanish.  It thus follows from  the Wigner-Eckart theorem that the only thing a~$G$-covariant Kraus operator~$K_{J,M,\alpha}$ does on the weight~$m$ of a basis state is to translate it  by~$M$, independently of the other relevant parameters,~$j$,~$J$,~$\lambda$ and~$\alpha$.  
We exploit this fact in the following definition and theorem when we introduce an isometry that satisfies the three conditions of definition~\ref{def:lociso}. 
 
  \begin{definition}
 \label{def:iso1}
 Let~$\mathscr{H}_B$ denote the Hilbert space spanned by kets~$|m\rangle$ where~$m$ ranges over the representation weights of the associated algebra of the group, and let    
 \begin{align}
 \label{def:W}
 \mathscr{W}:=\text{span} \left\{|j,\lambda; m\rangle \otimes  |m\rangle\right\}\subset \mathscr{H}\otimes{\mathscr{H}_B}.    
\end{align}
The isometry~$\mathcal{C}$~is defined by its action on the basis kets as:
 \begin{align}
|j, \lambda; m\rangle \xrightarrow{\mathcal{C}}  |j,\lambda; m\rangle \otimes |m\rangle. 
 \end{align}
 \end{definition}  
We now show that $\mathcal{C}$ satisfies all the conditions of definition \ref{def:lociso}.  
\begin{proposition}
\label{prop:C1}
$\mathcal{C}$ is a LOCC-simulating isometry. 
\end{proposition}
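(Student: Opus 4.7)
The plan is to verify the three conditions of Definition~\ref{def:lociso} one by one, with the bulk of the work being condition~\textbf{(1)}, which is where the Wigner--Eckart selection rule does the heavy lifting.

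For condition~\textbf{(1)}, I would first reduce to the case of a single irreducible covariant operation $\mathcal{E}_{J,\alpha}$ of the form~(\ref{def:Ej}), since every $G$-covariant map is a sum of these. The key observation is the selection rule~(\ref{eq:m}): by the Wigner--Eckart theorem, the Kraus operator $K_{J,M,\alpha}$ sends $|j,\lambda;m\rangle$ into a superposition of basis vectors whose weight is exactly $m+M$. Therefore, if I define a shift operator $T_M$ on $\mathscr{H}_B$ by $T_M|m\rangle=|m+M\rangle$ (which is unitary provided $\mathscr{H}_B$ is large enough to carry all weights of interest), then the product operator $K_{J,M,\alpha}\otimes T_M$ preserves the image subspace $\mathscr{W}$ and satisfies
\begin{equation}
(K_{J,M,\alpha}\otimes T_M)\,\mathcal{C}(\rho)\,(K_{J,M,\alpha}^{\dagger}\otimes T_M^{\dagger})=\mathcal{C}\!\left(K_{J,M,\alpha}\,\rho\,K_{J,M,\alpha}^{\dagger}\right).
\end{equation}
Summing over $M$ gives $\mathcal{C}\circ\mathcal{E}_{J,\alpha}\circ\mathcal{C}^{-1}$ a Kraus representation with product operators $\{K_{J,M,\alpha}\otimes T_M\}_M$. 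The subtle point, and in my view the main obstacle, is that a channel with product Kraus operators is only a \emph{separable} channel in general, not an LOCC one. To get genuine one-way LOCC, I would repackage the simulation as the following protocol: Alice performs the generalized measurement with Kraus operators $\{K_{J,M,\alpha}\}$ (indexed by $(J,M,\alpha)$ when the full covariant map is assembled), classically communicates the outcome $M$ to Bob, and Bob applies the unitary $T_M$. Completeness of Alice's measurement follows from the completeness of the Kraus set of $\mathcal{E}_{\text{cov}}$.

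For condition~\textbf{(2)}, I would exploit the structure of $G$-invariant states dictated by Schur's lemma: any $G$-invariant $\rho$ decomposes as $\rho=\bigoplus_j(I_j/d_j)\otimes\sigma_j$ in the multiplicity decomposition, which in the basis $\{|j,\lambda;m\rangle\}$ means its only nonzero matrix elements are between kets with identical $j$ and identical $m$. Applying $\mathcal{C}$ therefore yields
\begin{equation}
\mathcal{C}(\rho)=\sum_{j,m} q_{j,m}\,\tau^{A}_{j,m}\otimes |m\rangle\langle m|_B,
\end{equation}
with $q_{j,m}\ge 0$ and $\tau^A_{j,m}$ a legitimate density operator on $\mathscr{H}_A$ (obtained by collecting and normalizing the $(\lambda,\lambda')$ block at fixed $j,m$). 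This is a convex combination of product states and hence manifestly separable.

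For condition~\textbf{(3)}, a single example suffices: take the asymmetric pure state $|\psi\rangle=(|j,\lambda;m_1\rangle+|j,\lambda;m_2\rangle)/\sqrt{2}$ with $m_1\neq m_2$. Then $\mathcal{C}(|\psi\rangle)=(|j,\lambda;m_1\rangle\otimes|m_1\rangle+|j,\lambda;m_2\rangle\otimes|m_2\rangle)/\sqrt{2}$, whose reduced state on $B$ is the maximally mixed two-level state $\tfrac12(|m_1\rangle\langle m_1|+|m_2\rangle\langle m_2|)$, proving entanglement. Asymmetry of $|\psi\rangle$ is easy: the Cartan subalgebra generators act diagonally with eigenvalues $m_1\neq m_2$ on the two summands, so $|\psi\rangle$ is not a weight vector of a one-dimensional representation and cannot be $G$-invariant. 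With all three items verified, $\mathcal{C}$ meets the definition of a LOCC-simulating isometry.
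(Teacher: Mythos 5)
Your proposal is correct and follows essentially the same route as the paper: the same product Kraus operators $K_{J,M,\alpha}\otimes T_M$ with the weight-translation unitary $T_M$, implemented as the same one-way protocol (Alice measures $\{K_{J,M,\alpha}\}$, communicates $M$, Bob applies $T_M$), the same twirled/Schur structure of $G$-invariant states for separability, and the same coherence-in-$m$ superposition for condition (3). Your explicit handling of multiplicity coherences in condition (2) and the remark distinguishing separable from LOCC channels are minor refinements of the paper's argument, not a different approach.
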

\begin{proof}
To see that the first condition in definition~(\ref{def:lociso})  is satisfied, consider a~$G$-covariant CP-map~$\mathcal{E}_{\text{cov}}$ whose operator sum representation is given in terms of Kraus operators~$\left\{K_{J,M, \alpha} \right\}$. We define,     
\begin{equation}\label{osr}
\tilde{K}_{J,M,\alpha}:= K_{J,M,\alpha} \otimes T_M, 
\end{equation}
where, 
\begin{align}
\label{def:T}
T_M:= \sum_{m} |m+M\rangle \langle m|, 
\end{align}
is a translation operator. 
Let~$\mathcal{E}_{\text{local}}$ denotes the CP-map whose operator sum representation corresponds to the Kraus operators~$\tilde{K}_{J,M,\alpha}$ given in Eq.(\ref{osr}).  Note that from Eq.~(\ref{eq:m}) it follows that $\mathcal{E}_{\text{local}}=\mathcal{C} \circ \mathcal{E}_{\text{cov}} \circ \mathcal{C}^{-1}$.
We need to show that $\mathcal{E}_{\text{local}}$ can be implemented by LOCC. Indeed,
note that~$T_M$, being merely a translation operator, is unitary (assuming the range of the weights in the decomposition~(\ref{eq:Hj}) is unbounded).  Therefore, the map $\mathcal{E}_{\text{local}}$ can be implemented as follows: Alice perform a `local' measurement described by the Kraus operators $\{K_{J,M,\alpha}\}$ and send the part $M$ of the her measurement outcome to Bob,
who then perform the unitary transformation $T_{M}$. Hence, the first criterion of Definition~\ref{def:lociso} is satisfied.


Secondly, recall that any~$G$-invariant state~$\rho$ is equal to its own~$G$-twirling~(see Eq.~\ref{eq:twirl2}), 
\begin{align}
\rho= \sum_{j, \lambda} p_{j, \lambda} \Pi_{j, \lambda}, 
\end{align}
where the projection ~$\Pi_{j, \lambda}$ is equal to
\begin{align}
\Pi_{j, \lambda}=\sum_m |j,\lambda; m\rangle \langle j,\lambda; m|. 
\end{align}   
The state~$\mathcal{C}(\rho)$ is thus equal to 
\begin{align}
\mathcal{C}(\rho)=\sum_{j, \lambda} p_{j, \lambda} \sum_m |j,\lambda; m\rangle \langle j,\lambda; m|\otimes |m\rangle\langle m|, 
\end{align}
which is clearly a separable state. 

Finally, a state of the form 
\begin{align}
\label{ex:psi}
|\psi\rangle = c_1\: |j_1, \lambda_1; m_1\rangle + c_2\: |j_2, \lambda_2; m_2\rangle,   
\end{align} 
is mapped to the entangled state, 
\begin{align}
|\tilde{\psi}\rangle=c_1\: |j_1, \lambda_1; m_1\rangle\otimes |m_1\rangle + c_2\: |j_2, \lambda_2; m_2\rangle\otimes |m_2\rangle.   
\end{align}     
This completes the proof.  
\end{proof}
 
The example in Eq.~(\ref{ex:psi}) suggests that if a state has coherence in~$m$ it is mapped to an entangled state. In the next proposition, we make this claim rigorous and give necessary and sufficient conditions for a general mixed state~$\rho$ to be mapped to an entangled state by the isometry~$\mathcal{C}$. 
\begin{proposition}
\label{prop:gen}
Let $\Pi_m$ be the projection
$$
\Pi_{m}:=\sum_{j,\lambda}| j, \lambda; m\rangle\langle j, \lambda; m|\;. 
$$ Then,
the isometry~$\mathcal{C}$ maps a state~$\rho$ to an entangled state if and only if there exists $m$ such that~$[\rho,\Pi_m]\neq 0$; i.e. $\rho$ has coherence in~$m$. 
\end{proposition}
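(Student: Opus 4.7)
The plan is to analyze the block structure of $\mathcal{C}(\rho)$ in the bipartite basis and handle the two implications separately. Directly from Definition~\ref{def:iso1},
\begin{equation}
\mathcal{C}(\rho)=\sum_{m,m'}\bigl(\Pi_{m}\,\rho\,\Pi_{m'}\bigr)\otimes |m\rangle\langle m'|,
\end{equation}
so the cross-sector coherences of $\rho$, encoded in the blocks $\Pi_{m}\rho\,\Pi_{m'}$ with $m\neq m'$, are precisely what get paired with the off-diagonal operators $|m\rangle\langle m'|$ on Bob's side. Note also that $[\rho,\Pi_{m}]=0$ for every $m$ is equivalent to $\rho$ being block-diagonal in $m$, i.e.\ $\Pi_{m}\rho\,\Pi_{m'}=0$ whenever $m\neq m'$.

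For the easy direction, assume no coherence, i.e.\ $[\rho,\Pi_{m}]=0$ for every $m$. Then $\rho=\sum_{m}\Pi_{m}\rho\,\Pi_{m}$ and the expression above collapses to $\mathcal{C}(\rho)=\sum_{m}(\Pi_{m}\rho\,\Pi_{m})\otimes |m\rangle\langle m|$. Each summand is a PSD operator on $A$ tensored with a rank-one projector on $B$, so a spectral decomposition of $\Pi_{m}\rho\,\Pi_{m}$ within its sector produces an explicit separable (in fact classically correlated) decomposition. This is essentially the observation already used for $G$-invariant $\rho$ in the proof of Proposition~\ref{prop:C1}.

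For the converse, suppose $[\rho,\Pi_{m_{0}}]\neq 0$ for some $m_{0}$; I would invoke the PPT criterion. Partial transposition on $B$ yields
\begin{equation}
\mathcal{C}(\rho)^{T_{B}}=\sum_{m,m'}\bigl(\Pi_{m}\rho\,\Pi_{m'}\bigr)\otimes |m'\rangle\langle m|.
\end{equation}
From $[\rho,\Pi_{m_{0}}]\neq 0$ one extracts an index $n_{0}\neq m_{0}$ with $\Pi_{m_{0}}\rho\,\Pi_{n_{0}}\neq 0$, hence vectors $|f\rangle=\Pi_{m_{0}}|f\rangle$ and $|g\rangle=\Pi_{n_{0}}|g\rangle$ satisfying $\langle f|\rho|g\rangle\neq 0$; after rephasing $|g\rangle$, this quantity can be taken to be a negative real number. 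I would then test positivity of $\mathcal{C}(\rho)^{T_{B}}$ against
\begin{equation}
|\Phi\rangle:=|f\rangle\otimes |n_{0}\rangle+|g\rangle\otimes |m_{0}\rangle,
\end{equation}
whose $A$- and $B$-labels are swapped in order to mirror the partial transpose. The support properties of $|f\rangle$ and $|g\rangle$ kill all but two summands of $\mathcal{C}(\rho)^{T_{B}}$, and the survivors combine to give $\langle\Phi|\mathcal{C}(\rho)^{T_{B}}|\Phi\rangle=\langle f|\rho|g\rangle+\langle g|\rho|f\rangle=2\,\mathrm{Re}\,\langle f|\rho|g\rangle<0$, so $\mathcal{C}(\rho)$ is NPT and hence entangled.

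The main obstacle is identifying the right witness for the converse direction. A direct attempt to construct a separable decomposition of $\mathcal{C}(\rho)$ inside $\mathscr{W}$ is cumbersome because $\mathscr{W}$ is a strict subspace of $\mathscr{H}_{A}\otimes\mathscr{H}_{B}$ and separability is measured in the ambient space, where many unintuitive product decompositions may exist. The test vector $|\Phi\rangle$ above intentionally lies outside $\mathscr{W}$, which is exactly the swap implemented by the partial transpose; once it is in hand, the rest is a short index computation.
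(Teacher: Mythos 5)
Your proof is correct, but the nontrivial direction is argued by a genuinely different route than the paper's. The paper works entirely inside the image subspace $\mathscr{W}$: it observes that any separable decomposition of $\mathcal{C}(\rho)$ consists of product vectors that must lie in $\mathscr{W}$, characterizes all product vectors in $\mathscr{W}$ as being of the single-sector form $\bigl(\sum_{j,\lambda}c_{j,\lambda}|j,\lambda;m_i\rangle\bigr)\otimes|m_i\rangle$, and then uses the bijection (via the invertible linear map $\mathcal{C}$) between pure-state decompositions of $\rho$ and of $\mathcal{C}(\rho)$ to conclude $[\rho,\Pi_m]=0$ for all $m$; the converse is obtained as the contrapositive. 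You instead write out the block form $\mathcal{C}(\rho)=\sum_{m,m'}(\Pi_m\rho\,\Pi_{m'})\otimes|m\rangle\langle m'|$, get the easy direction from the explicit classically correlated decomposition (essentially as in the paper's Proposition~\ref{prop:C1}), and establish ``coherence $\Rightarrow$ entangled'' by exhibiting a vector $|\Phi\rangle$ on which the partial transpose has negative expectation, invoking the Peres criterion. The paper's argument is criterion-free and additionally shows that the separable decomposition can be taken within $\mathscr{W}$ itself; your argument yields an explicit witness and in fact proves something slightly stronger, namely that coherence in $m$ makes $\mathcal{C}(\rho)$ NPT, which meshes nicely with the negativity-based monotones $A_N$ and $A_{LN}$ introduced later (it shows they detect precisely the states with coherence in $m$). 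One small step worth spelling out: passing from $[\rho,\Pi_{m_0}]\neq 0$ to the existence of $n_0\neq m_0$ with $\Pi_{m_0}\rho\,\Pi_{n_0}\neq 0$ uses the resolution $\sum_m\Pi_m=\mathbbm{1}$ together with Hermiticity of $\rho$; this is immediate but should be stated.
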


\begin{proof}
Every state~$\tilde{\rho}$ acting on~$\mathscr{W}$ is the image of some state acting on~$\mathscr{H}$, i.\ e.\ ~$\tilde{\rho}=\mathcal{C}(\rho)$.  If~$\tilde{\rho}$ is a separable state, it must have a pure state decomposition comprised of product states~
$$
\tilde{\rho}=\sum_i q_i |\tilde{\phi}_i\rangle\langle \tilde{\phi}_i|,
$$
 where each~$|\tilde{\phi}_i\rangle$ is both a product state and the image of some state~$|\phi_i\rangle$ under the isometry~$\mathcal{C}$. This is because $\mathcal{C}$ is a linear invertible map, and any pure state decomposition of $\rho$ corresponds to a pure state decomposition of $\tilde{\rho}$ and vice versa.
 Thus, since $|\tilde{\phi}_i\rangle=\mathcal{C}(|\phi_i\rangle)$ is a product state, $|\phi_i\rangle$ must have the form
 \begin{align}
\label{phi}
|\phi_i\rangle=\sum_{j, \lambda} c_{i; j, \lambda} |j, \lambda; m_i\rangle.    
\end{align}
where $c_{i; j, \lambda}$ are some complex coefficients and the superposition above consists of a \emph{single} value for~$m=m_i$. Otherwise, containing two different values for~$m$ in the above expansion necessarily renders the state $|\tilde{\phi}_i\rangle$ entangled. Consequently, the form of the initial state~$\rho$ corresponding to~$\tilde{\rho}=\mathcal{C}(\rho)$ must be,  
$$
\rho=\sum_i q_i |\phi_i\rangle\langle\phi_i|\;,
$$
with $|\phi_i\rangle$ as in~Eq~\eqref{phi}.
According to Eq.(\ref{phi}) each $|\phi_i\rangle\langle\phi_i|$ commutes with $\Pi_m$ for all $m$ and therefore $[\rho,\Pi_m]= 0$. 
The argument works in the other direction as well. In other words,~if every pure state decomposition of~$\rho$ contains at least one pure state that is in a coherent superposition of two or more eigenstates with different values of~$m$, then~$\mathcal{C}(\rho)$ will be an entangled state.  
This completes the proof.  
\end{proof}
 
 The isometry $\mathcal{C}$ does not map \emph{all} asymmetric states to entangled states.  For example, the state
$
|\phi\rangle= |j, \lambda; m\rangle  
$ 
is not~$G$-invariant~(assuming~$G$ is non-Abelian and~$j$ does not label the identity irrep), and yet it is mapped to the product state  
$$
|\tilde{\phi}\rangle = |j, \lambda; m\rangle \otimes |m\rangle.  
$$
However, as we now show, we can define another LOCC-simulating isometry, similar to $\mathcal{C}$, that maps
$|j, \lambda; m\rangle$ to an entangled state.

\subsection{A complete set of LOCC-simulating isometries}
\label{subsec:Cg}

In Definition~\ref{def:iso1} we have used the basis $\{|j, \lambda; m\rangle\}$  to define the isometry $\mathcal{C}$.
However, there is nothing special about the choice of the irrep weights~$m$. In fact, the set of states,
\begin{align}
\label{mgm}
|j, \lambda; m\rangle_g := U(g) |j, \lambda; m\rangle 
\end{align}  
forms an equally valid basis for the irreps, labelled by the new weights~$m_g$ (the multiplicity index~$\lambda$ can always be relabelled if it is needed). On the other hand, by definition, the irrep basis states mix among themselves under the action of the group, 
\begin{align}
\label{eq:ketg}
U(g) |j, \lambda; m\rangle = \sum_{m'} D^{(j)}_{m, m'}(g) \:|j, \lambda; m'\rangle,  
\end{align} 
where~$D^{(j)}_{m, m'}(g)$ is the matrix representation of the~$j$'th  irrep.  
Reversing Eq.~(\ref{mgm}), we get, 
\begin{align}
\label{mmg}
|j, \lambda; m\rangle = \sum_{m'} D^{(j)}_{m, m'}(g^{-1}) \:|j, \lambda; m'\rangle_g. 
\end{align} 

Hence, if we had defined the isometry relative to the new weights, the state~$|j, \lambda; m\rangle$ would be mapped to an entangled state. In fact, the isometry~$\mathcal{C}$ is only one of a class of isometries that can be defined for different choices of~$g \in G$ relative to the  weights~$\left\{m_g\right\}$.~$\mathcal{C}$ is merely the isometry corresponding to the identity element of the group.   
  
\begin{definition}
 \label{def:isog}
For every~$g \in G$,  we define the isometry ~$\mathcal{C}_g$ as,  
 \begin{align}
 \mathcal{C}_g:= \left(\mathcal{U}(g) \otimes \mathcal{I}_B\right) \circ \mathcal{C} \circ \mathcal{U}^{\dagger}(g),  
 \end{align}  
 where~$\mathcal{U}(g):=U(g)(\bullet) U^{\dagger}(g) $, 
 and ~$\mathcal{I}_B$ is the identity superoperator acting on~$\mathscr{H}_B$.   
 \end{definition} 
 The isometry~$\mathcal{C}_g$ acts on basis states,~$|j, \lambda; m\rangle_g$, and maps them to 
 $$
 |j,\lambda; m\rangle_g \mapsto |j,\lambda; m\rangle_g \otimes |m\rangle.  
 $$  
  
Note that the image space of all the isometries~$\{\mathcal{C}_g\}$ is the space~$\mathscr{W}$ in~(\ref{def:W}).  Clearly, the proof of Proposition~\ref{prop:C1} can easily be modified to apply to all the set of isometries $\{\mathcal{C}_g\}$. 
Moreover, note that  the state $|\phi\rangle\equiv|j, \lambda; m\rangle$ is mapped to
\begin{align}
|\phi\rangle \xrightarrow{\mathcal{C}_g} |\tilde{\phi}\rangle=  \sum_{m'} D^{(j, \lambda)}_{m, m'}(g^{-1}) \:|j, \lambda; m'\rangle_g \otimes |m'\rangle, 
\end{align}
which is, in general, an entangled state. 


It is instructive at this stage to look at the specific group of rotations~SO(3), or similarly, the group~SU(2) to gain some intuition. The weights~$m$ of the associated algebra~$\mathfrak{su}(2)$ are one dimensional and correspond to the eigenvalues of the angular momentum operator, ~$J_z$,  along the~$z$-direction. Each irrep is labeled by the single number~$j$ corresponding to  the maximum~$z$-eigenvalue of angular momentum, and the total angular momentum,~$J^2$, equals~$j(j+1)$.  There is obviously nothing special about the choice of the~$z$-axis. The~$z$-axis can be rotated to a new  axis~$\hat{\text{n}}$, which corresponds to applying the respective group representation on the quantum states. 
One way to specify an element of the  group is to determine the axis~$\hat{\text{n}}$  to which it takes the~$z$-axis.  In other words, each isometry in the class of definition~\ref{def:isog} is identified by the choice of a new~$z$-direction and can be denoted as~$\mathcal{C}_{\hat{\text{n}}}$. 

Thus, to take full advantage of the entanglement features of the embedding, one has to take more than one isometry into consideration.     
 As we shall now see, if $\rho\in\mathcal{B}\left(\mathscr{H}\right)$ is an asymmetric state then there exists 
$g\in G$ such that $\mathcal{C}_g(\rho)$ is an entangled state. In fact, for  the $SU(2)$ group
we will see that only two directions are needed to characterize all the asymmetry properties of a state. That is, if 
$\mathcal{C}_{\hat{\text{n}}}(\rho)$ is separable for two independent choices of $\hat{\text{n}}$, then $\rho$ is necessarily
$G$-invariant. 

Also for more general simply connected groups,  there exists a \emph{finite} number of isometries $\{\mathcal{C}_{g_i}\}$ (associated with a finite number of group elements~$\{g_i\}$) that capture all the asymmetry properties of a state.  That is, if a state is mapped to a separable state by all the isometries in the finite set $\{\mathcal{C}_{g_i}\}$, then the state must be symmetric.  This allows in principle to check whether a state is~$G$-invariant or not, by considering  its bipartite image states only for a finite number of isometry elements. Otherwise, all the infinite isometries, each associated with a member of the group, must have been considered before such an assessment could be made.

Before proving the above claim rigorously, let us illustrate the idea of the proof with the simple and more familiar example of the group~SU(2).  Suppose that~$\mathcal{C}(\rho)$ is separable for some state~$\rho$. Then, according to Proposition~\ref{prop:gen}
the state~$\rho$ has no coherence in~$m$, the eigenvalue of the~$J_z$ operator. It means, in turn, that the state $\rho$ commutes with~$J_z$. By the same argument if~$\mathcal{C}_{\hat{x}}(\rho)$ is separable then the state $\rho$ commutes with~$J_x$. 
Therefore, if both~$\mathcal{C}(\rho)$ and~$\mathcal{C}_{\hat{x}}(\rho)$ are separable then $\rho$ commutes with both $J_z$
and $J_x$. But since $[J_z, J_x]=\imath J_y$, $\rho$ also commutes with $J_y$ and so it
must commute with all the elements of the group which means that $\rho$ is an $SU(2)$-invariant state. This line of argument can be generalized to other groups, as we now demonstrate. 

Suppose~$G$ is a simply connected group parametrized by~$r$ parameters. Let~$\mathfrak{g}$ be the associated algebra of~$G$ of rank~$\ell$, and let~$\mathfrak{h}$ be its ~$\ell$-dimensional Cartan subalgebra.  Denote the operator representation of  the infinitesimal generators of the group as ~$X_{a}: \mathscr{H} \rightarrow \mathscr{H}$, for~$a=1\cdots r$. Similarly, denote the representation of the Cartan operators spanning~$\mathfrak{h}$ as~$H_i: \mathscr{H} \rightarrow \mathscr{H}$, where~$i=1\cdots \ell$.  

Now, let~$S \subset G$ be the subgroup of~$G$ whose members permute the infinitesimal generators of the group among themselves. By this we mean, for every~$s \in S$,       
\begin{align}
\label{eq:S}
U(s)\: X_{a}\: U(s)^{\dag} = X_{a'(s)}, \:\: a, a' \in \left\{1 \cdots r\right\}. 
\end{align}
 As both~$\mathfrak{g}$ and~$\mathfrak{h}$ are finite, the subgroup~$S$ contains only a \emph{finite} number of elements.  
 We are now ready to prove the general case.  
\begin{proposition}
\label{prop:covent}
Let $\rho\in\mathcal{B}\left(\mathscr{H}\right)$. Then, $\rho$ is $G$-invariant if and only if  
 for all $s$ belonging to the finite subgroup~$S\subset G$,  the state $\mathcal{C}_s(\rho)$ is separable.  
\end{proposition}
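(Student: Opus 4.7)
The forward direction is immediate from Proposition~\ref{prop:C1} applied to the isometry $\mathcal{C}_s$. Since $\mathcal{C}_s=(\mathcal{U}(s)\otimes\mathcal{I}_B)\circ\mathcal{C}\circ\mathcal{U}^\dagger(s)$ and $\mathcal{U}^\dagger(s)[\rho]=\rho$ whenever $\rho$ is $G$-invariant, $\mathcal{C}_s(\rho)=(\mathcal{U}(s)\otimes\mathcal{I}_B)[\mathcal{C}(\rho)]$ is a local-unitary image of the separable state $\mathcal{C}(\rho)$, hence separable, for every $s\in G$ and in particular for every $s\in S$.

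For the converse, the plan is to first upgrade Proposition~\ref{prop:gen} to each $\mathcal{C}_s$. Running the same product-state decomposition argument in the rotated weight basis $\{|j,\lambda;m\rangle_s:=U(s)|j,\lambda;m\rangle\}$ shows that $\mathcal{C}_s(\rho)$ is separable iff $[\rho,\Pi_m^{(s)}]=0$ for every weight $m$, where $\Pi_m^{(s)}:=U(s)\Pi_mU^\dagger(s)$. Because each Cartan generator is a real linear combination of weight projections, $H_i=\sum_m (m)_i\,\Pi_m$, this is equivalent to $[\rho,\,U(s)H_iU^\dagger(s)]=0$ for all $i=1,\dots,\ell$ and all $s\in S$. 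Invoking the defining property~\eqref{eq:S} of $S$, every such conjugate $U(s)H_iU^\dagger(s)$ coincides with one of the generators $X_a$ (possibly up to a sign, which is harmless for commutation), so these relations collect into commutation constraints $[\rho,X_a]=0$ for every $a$ in the image of the map $(s,i)\mapsto a'(s,i)$.

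To finish, I would use that the centralizer $\mathfrak{c}_\rho:=\{T\in\mathcal{B}(\mathscr{H}):[\rho,T]=0\}$ is a Lie subalgebra of $\mathfrak{g}$, since it is closed under linear combinations and, by the Jacobi identity, under Lie brackets. Hence $\mathfrak{c}_\rho$ contains the Lie subalgebra $\mathfrak{g}'$ generated by the $S$-orbit of $\mathfrak{h}$, and it suffices to verify the structural claim $\mathfrak{g}'=\mathfrak{g}$. This is the higher-rank analogue of the $SU(2)$ identity $[J_z,J_x]=\imath J_y$ used in the discussion preceding the proposition: for a semisimple $\mathfrak{g}$ in its Cartan-Weyl presentation, the Hermitian generators consist of the $H_i$ together with root combinations $E_\alpha\pm E_{-\alpha}$, and $S$ is chosen precisely so that each of the latter is either an $S$-translate of a Cartan element or a bracket of two such translates. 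Once $\mathfrak{g}'=\mathfrak{g}$ is granted, $\rho$ commutes with every $X_a$, and simple-connectedness of $G$ promotes this to $[\rho,U(g)]=0$ for all $g\in G$, i.e.\ $\rho$ is $G$-invariant. The main obstacle is exactly this closure step: making precise, for an arbitrary compact semisimple Lie group, that the $S$-translates of $\mathfrak{h}$ generate $\mathfrak{g}$ under Lie brackets requires a careful accounting tied to the root-system structure of $\mathfrak{g}$; everything else is a mechanical combination of Propositions~\ref{prop:C1} and~\ref{prop:gen} in the rotated basis.
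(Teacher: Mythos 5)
Your proposal is correct and takes essentially the same route as the paper's proof: the forward direction follows from the LOCC-simulating property of $\{\mathcal{C}_s\}$, and for the converse you apply Proposition~\ref{prop:gen} in the rotated basis to get $[\rho,\Pi_m^{(s)}]=0$, hence $[\rho,H_i^{(s)}]=0$ with $H_i^{(s)}=U(s)H_iU^{\dagger}(s)=\sum_m m_i\,\Pi_m^{(s)}$ for all $s\in S$, and then recover all generators $X_a$ from brackets of these rotated Cartan operators. The ``closure step'' you single out as the main obstacle---that the $S$-translates of $\mathfrak{h}$ generate $\mathfrak{g}$ under commutators---is precisely the point the paper itself asserts without further justification (``Every $X_a$ can be constructed from the commutators of $H^{(s)}_{i}$, once all the $H^{(s)}_{i}$ for all $s \in S$ are included''), so your argument matches the published proof in both structure and level of detail.
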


\begin{proof}
If $\rho$ is $G$-invariant, then $\mathcal{C}_g(\rho)$ is separable for all~$g \in G$, and thus for all~$s \in S$,  since $\left\{\mathcal{C}_g\right\}$ is a set of LOCC-simulating maps. 

We therefore assume that $\mathcal{C}_s(\rho)$ is separable for all $s\in S$.  The requirement that  $\mathcal{C}_s(\rho)$ is separable implies that $\rho$,  when expressed in the basis~$|j, \lambda; m\rangle_s$,  has no coherence in~$m$. 
Consider  the projection, 
$$
\Pi^{(s)}_{m}:=\sum_{j,\lambda}| j, \lambda; m\rangle_s\langle j, \lambda; m|.
$$
The condition for separability is equivalent to the requirement that~ $[\rho,\Pi^{(s)}_{m }]=0$ for all~$m$ (see Proposition~\ref{prop:gen}). 

The set of operators, ~
$
H^{(s)}_{i}:=U(s)\: H_i \:U(s)^{\dag},  
$
 are all diagonal in the new basis,    
$$
H^{(s)}_{i} \:|j, \lambda; m\rangle_s = m_i \:|j, \lambda; m\rangle_s,     
$$ 
and form a representation for new Cartan operators. 
It follows that~  
$
H^{(s)}_{i}= \sum_m m_i \:\Pi^{(s)}_{m}.     
$
 Thus, if~$\mathcal{C}_s(\rho)$ is separable,~$\rho$ must satisfy    
$$
[\:\rho, H^{(s)}_{i}]=0, \;\; i=1\cdots \ell. 
$$
But this is true for all~$s \in S$ (including the identity~$e$, where~$H_i\equiv H^{(e)}_{i}$).  Every~$X_a$ can be constructed from the commutators of~$H^{(s)}_{i}$, once all the~$H^{(s)}_{i}$ for all~$s \in S$ are included.    
It follows that the state~$\rho$  commutes with all the generators~$X_a$, and consequently, with all the elements of the group as well.  In other words, the state is~$G$-invariant.  

 \end{proof}

In the next section,  we see how entanglement of the embedded state changes under~$G$-covariant transformations of the original state. This, in turn, enables us to relate the asymmetry features of the original state to the ensuing entanglement.

\subsection{Constructing Asymmetry Monotones From Entanglement Monotones}
\label{subsec:asym}
Roughly speaking, Propositions~\ref{prop:C1} and~\ref{prop:covent} imply that the evolution of asymmetry can be simulated by the evolution of entanglement. In particular, we can define asymmetry monotones for the states acting on~$\mathscr{H}$  in terms of the entanglement monotones of the states acting on~$\mathscr{W}$ to which they are mapped.  

\begin{definition}
\label{def:A}
For every bipartite entanglement monotone~$E$, we define the corresponding asymmetry monotone as,   
\begin{align}\label{ggg}
A^g_{E}: \mathcal{B}(\mathscr{H}) \rightarrow \mathbb{R}^+: \rho \mapsto E\left(\mathcal{C}_g (\rho)\right). 
\end{align}
\end{definition}
The following proposition ensures that $A^g_{E}$ is indeed an asymmetry monotone, assuming that $E$ is an entanglement monotone.
\begin{proposition}
Consider an entanglement monotone~$E$. If~$\rho \xrightarrow{\mathcal{E}_{\text{cov}}} \sigma$ is possible, then for every~$g \in G$, 
\begin{align}
\label{eq:entmon}
E\left(\mathcal{C}_g (\rho)\right) \geq E\left(\mathcal{C}_g (\sigma)\right). 
\end{align}
\end{proposition}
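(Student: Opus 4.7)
The plan is to reduce the statement to the LOCC-monotonicity of $E$ by showing that $\mathcal{C}_g \circ \mathcal{E}_{\text{cov}} \circ \mathcal{C}_g^{-1}$, regarded as a superoperator on $\mathcal{B}(\mathscr{W})$, can be implemented by LOCC. Once this is established, writing $\mathcal{C}_g(\sigma) = (\mathcal{C}_g \circ \mathcal{E}_{\text{cov}} \circ \mathcal{C}_g^{-1})(\mathcal{C}_g(\rho))$ and invoking Definition~\ref{def:monotones} immediately yields Eq.~(\ref{eq:entmon}).

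First, I would unfold the composition using Definition~\ref{def:isog}:
\[
\mathcal{C}_g \circ \mathcal{E}_{\text{cov}} \circ \mathcal{C}_g^{-1} = (\mathcal{U}(g)\otimes\mathcal{I}_B) \circ \mathcal{C} \circ \mathcal{U}^{\dagger}(g) \circ \mathcal{E}_{\text{cov}} \circ \mathcal{U}(g) \circ \mathcal{C}^{-1} \circ (\mathcal{U}^{\dagger}(g)\otimes\mathcal{I}_B),
\]
where $\mathcal{C}^{-1}$ denotes the inverse of $\mathcal{C}$ on its image $\mathscr{W}$. The $G$-covariance of $\mathcal{E}_{\text{cov}}$, written equivalently as $\mathcal{U}^{\dagger}(g) \circ \mathcal{E}_{\text{cov}} \circ \mathcal{U}(g) = \mathcal{E}_{\text{cov}}$ (a direct consequence of Eq.~(\ref{gcov})), collapses the three middle factors, leaving
\[
\mathcal{C}_g \circ \mathcal{E}_{\text{cov}} \circ \mathcal{C}_g^{-1} = (\mathcal{U}(g)\otimes\mathcal{I}_B) \circ \mathcal{E}_{\text{local}} \circ (\mathcal{U}^{\dagger}(g)\otimes\mathcal{I}_B),
\]
where $\mathcal{E}_{\text{local}} = \mathcal{C} \circ \mathcal{E}_{\text{cov}} \circ \mathcal{C}^{-1}$ is LOCC by Proposition~\ref{prop:C1}.

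Next, I would observe that $\mathcal{U}(g)\otimes\mathcal{I}_B$ is a local unitary acting only on Alice's subsystem. Conjugating an LOCC protocol by such a local unitary remains LOCC: Alice simply prepends $U^{\dagger}(g)$ to, and appends $U(g)$ after, her portion of the original protocol, while Bob's actions and the classical communication are untouched. Consequently $\mathcal{C}_g \circ \mathcal{E}_{\text{cov}} \circ \mathcal{C}_g^{-1}$ is LOCC, and the LOCC-monotonicity of $E$ delivers the inequality $E(\mathcal{C}_g(\rho)) \geq E(\mathcal{C}_g(\sigma))$.

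The conceptual content of the proof is concentrated in one observation: $G$-covariance is exactly what is needed to absorb the inner conjugation by $\mathcal{U}(g)$, reducing the general case to Proposition~\ref{prop:C1} dressed by local unitaries. Everything else is bookkeeping. The mildest subtlety is that $\mathcal{C}_g$ is an isometry rather than a unitary, so $\mathcal{C}_g^{-1}$ must be interpreted on its image $\mathscr{W}$; since the whole composition is only ever applied to states supported on $\mathscr{W}$, this causes no difficulty.
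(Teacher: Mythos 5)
Your proof is correct and follows essentially the same route as the paper: reduce the claim to LOCC-monotonicity of $E$ by showing that $\mathcal{C}_g \circ \mathcal{E}_{\text{cov}} \circ \mathcal{C}_g^{-1}$ is implementable by LOCC, which is exactly the ``extension of Proposition~\ref{prop:C1} to all isometries $\mathcal{C}_g$'' that the paper invokes. Your local-unitary-conjugation argument, using $\mathcal{U}^{\dagger}(g)\circ\mathcal{E}_{\text{cov}}\circ\mathcal{U}(g)=\mathcal{E}_{\text{cov}}$ to collapse the middle factors, is simply an explicit (and clean) way of establishing that extension, which the paper asserts without detail.
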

\begin{remark} 
Similar inequality holds in the case of non-deterministic~$G$-covariant CP-maps 
for the average of~$E$, assuming~$E$ is an ensemble monotone (See section~\ref{subsec:mon}). 
\end{remark}
\begin{proof}
The result follows directly from the definition ~\ref{def:isog} and the extension of proposition~\ref{prop:C1} to all isometries~$\mathcal{C}_g$.  
\end{proof}
 
As not all asymmetric states are taken to entangled states, the asymmetry monotone~$A^g_{E}$ is not faithful, even if~$E$ is itself a faithful entanglement monotone. However, proposition~\ref{prop:covent} allows us to define a faithful asymmetry monotone from the monotones~$A^g_{E}$: 
\begin{proposition}
\label{prop:sup}
The function
\begin{align}
A^{\text{sup}}_{E}:= \sup_{g \in G} A^g_{E},
\end{align}
where~$\sup_{g \in G}$ stands for the supremum taken over all~$g$ in~$G$, is a faithful asymmetry monotone,  provided~$E$ is a faithful entanglement monotone. 
\end{proposition}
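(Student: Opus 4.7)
The plan is to verify the two requirements in turn: monotonicity under covariant CP maps, and faithfulness in the sense of Definition~\ref{def:faithful}. Monotonicity comes essentially for free from the preceding proposition. Indeed, for any fixed $g \in G$ and any covariant transformation $\rho \xrightarrow{\mathcal{E}_{\text{cov}}} \sigma$, we have $E(\mathcal{C}_g(\rho)) \geq E(\mathcal{C}_g(\sigma))$. Since the right-hand side is bounded above by $\sup_{g'\in G} E(\mathcal{C}_{g'}(\sigma)) = A^{\text{sup}}_E(\sigma)$, we would conclude $E(\mathcal{C}_g(\rho)) \geq A^{\text{sup}}_E(\sigma)$ for every $g$, and then taking the supremum over $g$ on the left yields $A^{\text{sup}}_E(\rho) \geq A^{\text{sup}}_E(\sigma)$.

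For faithfulness, I would argue both implications. Suppose $\rho$ is $G$-invariant. Proposition~\ref{prop:C1} (criterion~(2) of Definition~\ref{def:lociso}) tells us that $\mathcal{C}(\rho)$ is separable, and the extension of that proposition to the whole family $\{\mathcal{C}_g\}$ noted in Section~\ref{subsec:Cg} gives that $\mathcal{C}_g(\rho)$ is separable for every $g \in G$. Faithfulness of $E$ then yields $E(\mathcal{C}_g(\rho)) = 0$ for every $g$, so $A^{\text{sup}}_E(\rho) = 0$.

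For the converse, assume $A^{\text{sup}}_E(\rho) = 0$. Since $E$ is nonnegative, this forces $E(\mathcal{C}_g(\rho)) = 0$ for every $g \in G$. Because $E$ is faithful as an entanglement monotone, each $\mathcal{C}_g(\rho)$ must be separable. In particular, $\mathcal{C}_s(\rho)$ is separable for every element $s$ of the finite subgroup $S \subset G$ appearing in Proposition~\ref{prop:covent}. Applying that proposition then forces $\rho$ to be $G$-invariant, completing the faithfulness claim.

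The only subtle point, and the place where I would spend most care, is making sure we really may invoke both criterion~(2) of Definition~\ref{def:lociso} for every $\mathcal{C}_g$, and the finite-set characterization of Proposition~\ref{prop:covent}. The former is immediate from the definition $\mathcal{C}_g = (\mathcal{U}(g) \otimes \mathcal{I}_B) \circ \mathcal{C} \circ \mathcal{U}^\dagger(g)$ because $\mathcal{U}^\dagger(g)$ preserves $G$-invariance and local unitaries preserve separability; the latter is exactly the content of Proposition~\ref{prop:covent}. With these two ingredients in hand, no further calculation is needed, and the proposition follows in a few lines.
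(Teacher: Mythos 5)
Your overall strategy --- monotonicity inherited from the per-$g$ inequality of the preceding proposition, and faithfulness obtained by combining faithfulness of $E$ with Proposition~\ref{prop:covent} --- is exactly what the paper intends (the paper states Proposition~\ref{prop:sup} without a separate proof, and its remark about replacing the supremum by a maximum over the finite subgroup $S$ confirms this reading). The faithfulness half of your argument is correct as written: $G$-invariance of $\rho$ gives separability of every $\mathcal{C}_g(\rho)$, hence $A^{\text{sup}}_E(\rho)=0$; conversely $A^{\text{sup}}_E(\rho)=0$ forces every $\mathcal{C}_s(\rho)$, $s\in S$, to be separable, and Proposition~\ref{prop:covent} yields $G$-invariance.

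However, the monotonicity paragraph contains a step that fails as stated. From $E(\mathcal{C}_g(\rho))\geq E(\mathcal{C}_g(\sigma))$ together with $E(\mathcal{C}_g(\sigma))\leq A^{\text{sup}}_E(\sigma)$ you cannot conclude $E(\mathcal{C}_g(\rho))\geq A^{\text{sup}}_E(\sigma)$: you are bounding the right-hand side from \emph{above}, which gives an inequality pointing the wrong way. The intermediate claim is in fact false in general: take $\mathcal{E}_{\text{cov}}$ to be the identity and $\rho=\sigma=|j,\lambda;m\rangle\langle j,\lambda;m|$; then $E(\mathcal{C}_e(\rho))=0$ while $A^{\text{sup}}_E(\sigma)>0$ because some $\mathcal{C}_g(\rho)$ is entangled, so $E(\mathcal{C}_e(\rho))\geq A^{\text{sup}}_E(\sigma)$ fails. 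The conclusion you want is nevertheless one correct chain away: for every $g\in G$ one has $E(\mathcal{C}_g(\sigma))\leq E(\mathcal{C}_g(\rho))\leq \sup_{g'\in G}E(\mathcal{C}_{g'}(\rho))=A^{\text{sup}}_E(\rho)$, and taking the supremum over $g$ on the left gives $A^{\text{sup}}_E(\sigma)\leq A^{\text{sup}}_E(\rho)$. With that repair (and, if one wants monotonicity in the full sense of Definition~\ref{def:monotones}, the same observation applied to the ensemble state, as in the paper's remark on ensemble monotones), your proof is complete and coincides with the paper's intended argument.
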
 
Replacing the supremum above by a maximum over the finite number of elements in $S\subset G$ (see Proposition~\ref{prop:covent}) will also lead to a faithful asymmetry monotone. For example, if $G=SU(2)$ then the function
$$
\max_{\hat{\text{n}} \in \{\hat{z},\;\hat{x}\}} A^{\hat{\text{n}}}_{E}
$$
is also an asymmetry monotone.
 
 \subsection{Unitary Transformation}
 If the CP-map is reversible, i.\ e.\, a unitary operation, then the condition of the monotonicity for the monotones~(\ref{ggg}) must be true in both directions, which in turn implies that the monotone functions must remain constant. 
 \begin{proposition}
 \label{prop:const}
Consider an entanglement monotone~$E$. If~$\rho \overset{{\mathcal{E}_{\text{cov}}} }\leftrightarrow \sigma$ is a reversible~$G$-covariant transformation, then for every~$g \in G$, $A^g_{E}$ is a conserved quantity; i.e. 
\begin{align}
\label{eq:entmon}
E\left(\mathcal{C}_g (\rho)\right)= E\left(\mathcal{C}_g (\sigma)\right). 
\end{align}
\end{proposition}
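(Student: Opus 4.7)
The plan is to obtain the equality by applying the monotonicity inequality of the preceding proposition twice, once in each direction, exploiting the fact that a reversible $G$-covariant transformation has a $G$-covariant inverse.

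First I would unpack what ``reversible $G$-covariant transformation'' means. A reversible CP-map is a unitary $\mathcal{V}[\,\cdot\,] = V(\,\cdot\,)V^\dagger$, and by the covariance condition in Eq.~(\ref{def:cov}) together with the unitary case discussed in Section~\ref{subsec:G-cov}, $G$-covariance of $\mathcal{V}$ is equivalent to $[U(g),V]=0$ for all $g \in G$. Taking Hermitian conjugates immediately yields $[U(g),V^\dagger]=0$ for all $g \in G$, so the inverse channel $\mathcal{V}^{-1}[\,\cdot\,] = V^\dagger(\,\cdot\,)V$ is also $G$-covariant. Thus reversibility gives us two covariant maps: one from $\rho$ to $\sigma$, and one from $\sigma$ to $\rho$.

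Next I would invoke the previous proposition, which asserts that for any entanglement monotone $E$ and any $g \in G$, the existence of a covariant transformation from one state to another implies the monotone inequality~(\ref{eq:entmon}) for $\mathcal{C}_g$ applied to the two states. Applying this to the covariant map $\mathcal{V}$ gives $E(\mathcal{C}_g(\rho)) \geq E(\mathcal{C}_g(\sigma))$, and applying it to $\mathcal{V}^{-1}$ gives $E(\mathcal{C}_g(\sigma)) \geq E(\mathcal{C}_g(\rho))$. Combining the two inequalities yields the desired equality.

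There is essentially no obstacle here beyond a careful reading of definitions: the proof is a two-line consequence of the preceding proposition applied in both directions, once the observation is made that the inverse of a $G$-invariant unitary is itself $G$-invariant. The only point worth flagging is that this argument relies on $\mathcal{E}_{\text{cov}}$ being genuinely reversible as a CP-map, i.e.\ a single unitary conjugation rather than, say, a non-trivial mixture of covariant Kraus operators that happens to admit a left inverse on the particular state $\rho$; the statement is naturally read in the former sense, consistent with the heading of Section~\ref{subsec:asym} on unitary transformations.
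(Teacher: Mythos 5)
Your proof is correct and follows essentially the same route as the paper, which simply notes that reversibility makes the monotonicity inequality of the preceding proposition hold in both directions, forcing equality. Your explicit verification that the inverse of a $G$-invariant unitary is itself $G$-invariant (via $[U(g),V]=0 \Rightarrow [U(g),V^\dagger]=0$) is a detail the paper leaves implicit, but it is the same argument.
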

Thus, for closed systems governed by symmetric Hamiltonian, every entanglement monotone $E$ leads to new conserved quantities, $\{A^g_{E}\}_{g\in G}$. For a Hamiltonian that is symmetric with respect to the group $G$, the expectation values of the generators of $G$
are also conserved quantities. However, unlike $A^g_{E}$, for open systems these expectation values are not behaving monotonically.   

\section{Examples}
We now review in more detail  some examples of asymmetry monotones  that are constructed from entanglement  monotones through the class of LOCC simulating isometries.  

\subsection{The negativity of entanglement as a measure of Asymmetry} 

Many totally new asymmetry monotone can be constructed from entanglement monotones using the isometry~$\mathcal{C}$. Here we introduce two of such monotones for the first time.  
One such monotone uses the negativity of entanglement, and the other uses the logarithmic negativity~\cite{VW02, Plenio05}.  
 \begin{definition}
 The negativity of asymmetry is defined as, 
 \begin{align}
A_{N} (\rho):= \frac{\parallel\mathcal{C}(\rho)^{\Gamma} \parallel_{1}-1}{2}, 
 \end{align}
 and the logarithmic negativity of asymmetry is 
 \begin{align}
 A_{LN} (\rho):=\log \parallel\mathcal{C}(\rho)^{\Gamma} \parallel_{1},  
 \end{align}
 where $\Gamma$ denotes partial transpose and $\parallel \bullet \parallel_{1}$ is the 1-norm
 \begin{align}
 \parallel \rho\parallel_{1} = \text{Tr}\: \sqrt{\rho^{\dagger} \rho}.  
 \end{align}
 \end{definition}
Both negativity and logarithmic negativity are particular useful monotones as they are very easily computable for all states, pure or mixed. Note however that the negativity and the logarithmic negativity do not reduce to entropy functions for pure states.
For pure states, the negativity of asymmetry can be expressed in a very simple closed form.   
 As discussed in~\cite{GS08},  every pure state can be brought to a standard form with no explicit multiplicity index by~$G$-covariant transformations.
Consider the pure state in the standard form, 
$ 
|\psi\rangle =\sum_{j,m} \sqrt{p_{j,m}}\: |j; m\rangle.     
$ 
The norm of the partial transpose is 
 \begin{align}
 \parallel  \mathcal{C}&\left(|\psi\rangle\langle \psi|\right)^{\Gamma}  \parallel_1
 =\left(\sum_{j,m} \sqrt{p_{j,m}}\right)^2. 
 \end{align}   
 It follows that  the logarithmic negativity of asymmetry is equal to 
  \begin{align}
 A_{LN} \left(|\psi\rangle \langle \psi|\right) = 2 \log \left(\sum_{j,m} \sqrt{p_{j,m}}\right).  
 \end{align}
After simplifying the equations, the negativity of asymmetry can be expressed as
 \begin{align}
A_N \left(|\psi\rangle\langle\psi|\right) = \sum_{j\neq j',m\neq m'} \sqrt{p_{j,m}\:p_{j',m'}}.
 \end{align}


\subsection{Asymmetry Monotones Based On The Squashed Entanglement}
Squashed asymmetry is another new monotone constructed from the squashed entanglement monotone~\cite{CW04}. 
\begin{definition}
The squashed asymmetry is defined as
\begin{align}
A_{sq}(\rho):= E_{sq} \left(\mathcal{C}(\rho) \right),   
\end{align}
where, 
\begin{align}
E_{sq} \left(\mathcal{C}(\rho)\right)= \frac{1}{2} \inf_{C} S(A:B\:|\:C)
\end{align}
is the squashed entanglement.~$A$ and~$B$ denote the systems associated with the Hilbert spaces~$\mathscr{H}$ and~$\mathscr{H}_B$ respectively.~$C$ denotes an auxiliary system with Hilbert space~$\mathscr{H}_C$.   The minimum is taken over all extensions of $\mathcal{C}(\rho)$ to a tripartite state $\sigma^{ABC}$ acting on~$\mathscr{H}\otimes \mathscr{H}_B \otimes \mathscr{H}_C$, and the function~$S(A:B| C)$ is the conditional mutual entropy.  
\end{definition}
Squashed entanglement is known to be an additive monotone over the tensor product of states~\cite{CW04}. It is also a lower bound on entanglement of formation and an upper bound on the distillable entanglement. Its asymmetry counterpart introduced here could shed light on the properties of multiple copy~$G$-covariant transformations.  

\subsection{Measures Based On Distance}

Monotones based on how far states are from the set of non-resources are known as distance measures~\cite{VPRK97}. The geometric intuition  at play here can apply to various resources,  not just entanglement. If the resource is entanglement,  then the more entangled a state is, the further away it is from the set of separable states. The `distance' between any two states~$\rho$ and~$\sigma$ is measured by a function~$D(\rho, \sigma)$ with distance-like properties (e.g.  $D(\rho, \sigma)\geq 0$ with equality if and only if $\sigma=\rho$). The function~$D$, however, need not be literally a metric. All is needed is that $D$ preserve the partial order, and that~$D(\rho, \rho)= 0$ for all~$\rho$. $D$ need not satisfy the triangle inequality, for instance, and it need not even be symmetric. 
The distance-based monotone is defined as the minimum distance to the target set~Q:
\begin{align}
\label{ED}
E_D(\rho):= \inf_{\sigma \in \text{Q}} D(\rho, \sigma). 
\end{align} 

 In the case of entanglement, the target set is the set~SEP of separable states. If the function 
 $D(\rho,\sigma)=\text{Tr}\left[\rho\log\rho-\rho\log\sigma\right]$ is the relative entropy, then $E_D$ above is called the relative entropy of entanglement (REE).
 The REE has many nice properties and it plays a crucial role in the theory of entanglement~\cite{PV07,HHHH09}. 
 
 Just as in the previous subsection, we can use Eq.(\ref{ggg}) to define an asymmetry monotone that is based on the REE. 
 We call this monotone the relative entropy of asymmetry (REA). However,
 unlike the monotones in the previous subsection, distance based monotones of asymmetry can also be defined directly by choosing the target set~Q to be the set of~$G$-invariant states. In this case, if $D$ is taken to be the relative entropy then the resulting monotone is known to be the $G$-asymmetry~\cite{VAWJ08,GMS09}.
How the $G$-asymmetry is related the REA is an important question which we discuss here only partially. A more detailed study of the comparison is left for future work. 
  
 \subsubsection{The Relative Entropy of Asymmetry}

 As discussed above, an important and widely studied entanglement distance monotone is the REE, 
 \begin{align}
 E_R(\rho)= \min_{\sigma \in \text{SEP}} S\left(\rho \parallel \sigma\right),  
 \end{align}   
 where the relative entropy~
 $$
 S\left(\rho \parallel \sigma\right)=-S(\rho) - \text{Tr} \left(\rho \log \sigma\right),  
 $$
  is the distance function and where the infimum can be replaced with a minimum. The relative entropy is not symmetric and does not preserve the triangle inequality.    
   Following section~\ref{subsec:asym}, we can define a class of asymmetry monotones, 
 \begin{align}
 \label{def:ARg}
 A_{R}^{g}(\rho):= E_R\left(\mathcal{C}_g(\rho)\right), \;\; \forall g \in G. 
 \end{align}
Finally, we define the relative entropy of asymmetry (REA) to be the maximum monotone.  
\begin{definition}
\label{def:REA}
The relative entropy of asymmetry (REA) is the monotone,  
\begin{align}
  A_{R}^{\text{max}}(\rho):= \max_{s \in S} A_{R}^{s}(\rho), 
\end{align}
  where the finite subgroup~$S\subset G$  was defined by the property in Eq.~(\ref{eq:S}).   
\end{definition}
From the discussion in section~\ref{subsec:asym} it follows that~$A_{R}^{\max}$ is faithful, i.\ e.\  $A_{R}^{\max}(\rho) \equiv 0$ if and only if~$\rho$ is~$G$-invariant. 

\subsubsection{Comparison with~$G$-Asymmetry}

Choosing the set~INV of~$G$-invariant  as the target set~$Q$ for the states acting on~$\mathscr{H}$ leads to a measure known as the~$G$-asymmetry~\cite{VAWJ08} or, alternatively, the relative entropy of frameness~\cite{GMS09}, 
\begin{align}
 \label{def:AG}
 A_G:= \min_{\sigma \in \text{\text{INV}}} S\left(\rho \parallel \sigma\right) \equiv  S\left(\mathcal{G}(\rho)\right)-S(\rho).  
\end{align}
We refer to~$ A_G$ as~$G$-asymmetry for the rest of the paper. Here,~$\mathcal{G}(\rho)$ is the twirling operation discussed in Eq.~(\ref{def:twirl}) of section~\ref{subsec:G-cov}.  

In order to compare~$G$-asymmetry with REA,  let us first  consider a slightly different function, also based on the relative entropy of entanglement but with a different target set relative to which the distance is minimized.
\begin{figure}[h]
\centering
\includegraphics[width=0.5\textwidth]{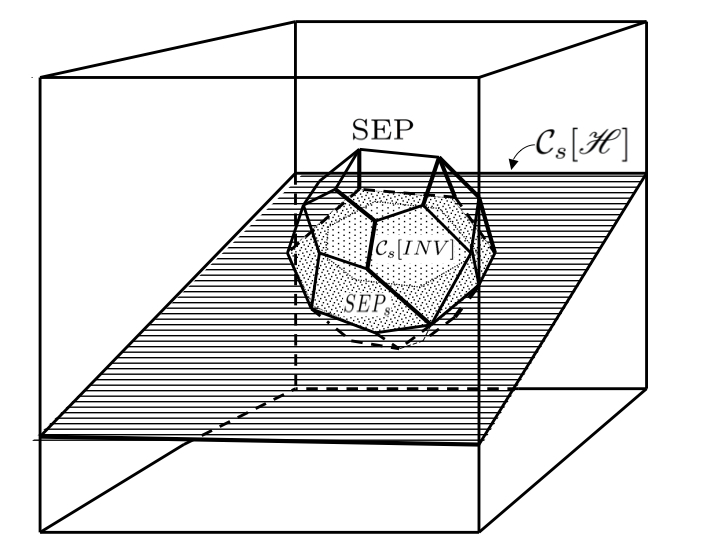}
\caption{A schematic depiction of the space of bipartite states.~$\text{SEP}_s$ is the intersection of the set  of separable states,~$\text{SEP}$, with the image of the~$\mathcal{C}_s$-isometry denoted here as~$\mathcal{C}_s [\mathscr{H}]$.  
The image of the set of~$G$-invariant states under~$\mathcal{C}_s$, denoted as~$\mathcal{C}_s [\text{INV}]$, is a strict subset of~$\text{SEP}_s$.} 
\label{fig:figSEP}
\end{figure}
 Each isometry~$\mathcal{C}_s$, for~$s \in S$,  leads in general to a strict distinct subset of SEP that act on~$\mathscr{H}\otimes \mathscr{H}_B$. 
 We denote the set by~$\text{SEP}_s$. That is,~$\text{SEP}_s$ is the intersection of SEP with the image of~$\mathcal{C}_s$ (See Figure~\ref{fig:figSEP}).   
We also denote the image of the set of~$G$-invariant states under~$\mathcal{C}_s$ as~$\mathcal{C}_s [\text{INV}]$. Note that if~$G$ is not Abelian, then~$\mathcal{C}_s [\text{INV}]$ is a strict subset of~$\text{SEP}_s$\footnote{If~$G$ is Abelian, then all separable states in~$\text{SEP}_s$ are images of invariant states and thus~$\mathcal{C}_s [\text{INV}]=\text{SEP}_s$ (See Appendix~\ref{A2}).}. For example, as we saw earlier,  
~$\text{SEP}_s$ also contains product states~$|\tilde{\phi}\rangle = |j, \lambda; m\rangle_s \otimes |m\rangle$  that are the images of the states~$ |j, \lambda; m\rangle_s$. Yet, the eigenstates~$ |j, \lambda; m\rangle_s$ are not~$G$-invariant when~$j\neq 0$. 
 We now define the function~$A_{R}^{*}$ as  
 \begin{align}
 \label{def:ARW}
 A_{R}^{s*}(\rho):= \min_{\sigma \in \mathcal{C}_s [\text{INV}] } S\left(\mathcal{C}_s(\rho) \parallel \sigma \right).  
\end{align}
The function~$A_{R}^{s*}$ can, in general, be greater than~$A_{R}^{\max}$ but it can never be  smaller. 
\begin{proposition}
\label{prop:AAs}
For every~$s \in S$, ~$A_{R}^{s*}$ are greater than or equal to the REA. 
 \begin{align}
 \label{def:ARAs} 
  A_{R}^{\max}(\rho) \leq  A_{R}^{s*}(\rho), \;\;\; \forall \rho \in \mathcal{B}(\mathscr{H}). 
 \end{align}
 \end{proposition}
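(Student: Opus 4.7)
The plan is to exploit two observations: the relative entropy is invariant under isometries, and $\mathcal{C}_s[\text{INV}]\subseteq\text{SEP}$. First, I would note that $\mathcal{C}_s$ acts on density operators by conjugation with a genuine Hilbert-space isometry $V_s$, so for any $\tau\in\text{INV}$,
\[
S\bigl(\mathcal{C}_s(\rho)\parallel\mathcal{C}_s(\tau)\bigr)=S(\rho\parallel\tau).
\]
Parametrising $\mathcal{C}_s[\text{INV}]$ by $\sigma=\mathcal{C}_s(\tau)$ and minimising over $\tau$ then gives
\[
A_R^{s*}(\rho)=\min_{\tau\in\text{INV}}S(\rho\parallel\tau)=A_G(\rho),
\]
so $A_R^{s*}$ is in fact independent of $s$ and coincides with the $G$-asymmetry defined in~(\ref{def:AG}). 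This collapse is the heart of the argument: once it is in hand, the indexing by $s$ on the right-hand side of the proposition is cosmetic and the same value appears for every $s$.

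Next, I would invoke Proposition~\ref{prop:C1}, which guarantees that $\mathcal{C}_s$ sends invariant states to separable states, so $\mathcal{C}_s[\text{INV}]\subseteq\text{SEP}$. Because the minimum of a non-negative function over a smaller set is at least as large, this yields, for every $s\in S$,
\[
A_R^s(\rho)=\min_{\sigma\in\text{SEP}}S\bigl(\mathcal{C}_s(\rho)\parallel\sigma\bigr)\le\min_{\sigma\in\mathcal{C}_s[\text{INV}]}S\bigl(\mathcal{C}_s(\rho)\parallel\sigma\bigr)=A_R^{s*}(\rho),
\]
with a common right-hand side $A_G(\rho)$. Taking the maximum over $s\in S$ on the left-hand side then produces
\[
A_R^{\max}(\rho)=\max_{s\in S}A_R^s(\rho)\le A_R^{s*}(\rho),
\]
which is precisely the inequality in the proposition.

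There is no genuinely hard step here; the only place one could slip is in forgetting that the isometric invariance of relative entropy renders $A_R^{s*}$ effectively $s$-independent and treating the right-hand side as if it varied with $s$. As a by-product the argument yields the sharper statement $A_R^{\max}(\rho)\le A_G(\rho)$, which should be useful for the detailed comparison between the REA and the $G$-asymmetry that the paper defers to future work.
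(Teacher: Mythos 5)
Your proof is correct and follows essentially the same route as the paper: the inclusion $\mathcal{C}_s[\text{INV}]\subseteq\text{SEP}_s\subseteq\text{SEP}$ makes each $A_R^{s}$ a minimization of the same relative entropy over a larger set than $A_R^{s*}$, whence $A_R^{s}\le A_R^{s*}$. Your additional observation — that isometric invariance of the relative entropy gives $A_R^{s*}\equiv A_G$, so the right-hand side is $s$-independent — is exactly what the paper records immediately afterwards as Corollary~\ref{cor:ARWAG}; pulling it forward is what makes the passage from the per-$s$ inequality to $A_R^{\max}(\rho)\le A_R^{s*}(\rho)$ fully rigorous, a step the paper's own proof states rather loosely.
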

 \begin{proof}
For any  given~$s \in S$,~$ \mathcal{C}_s [\text{INV}]   \subset \text{SEP}_{s} \subseteq \text{SEP}$.  
It follows that~$A_{R}^{s} \leq A_{R}^{s*}$, since~$A_{R}^{s}$ is obtained by minimizing the relative entropy over the larger set~$\text{SEP}$ that includes~$ \mathcal{C}_s [\text{INV}] $.  As this is true for all~$s \in S$, ~$A_{R}^{s*}$ is greater than or equal to the maximum ~$A_{R}^{\max}$ too.  
\end{proof}


The isomorphism between the two sets~INV and~$ \mathcal{C}_s [\text{INV}] $ implies that the minimum taken over~$ \mathcal{C}_s [\text{INV}]  $ in the definition of~$A_{R}^{s*}(\rho)$ coincides with the minimum of~$G$-asymmetry~$A_G$ in Eq.~(\ref{def:AG}). By this we mean that the separable state that minimizes the relative entropy in Eq.~(\ref{def:ARW}) is the image, under the isometry~$\mathcal{C}_s$, of the invariant state  that minimize the relative entropy in Eq.~(\ref{def:AG}).       
 
 To see this, consider the spectral decomposition~ of states~$\rho$ and~$\sigma$ acting on~$\mathscr{H}$, namely,~$\rho=\sum_i p_i |\psi_i\rangle \langle \psi_i|$ and~$\sigma=\sum_i q_i |\phi_i\rangle \langle \phi_i|$. Recall that~$\mathcal{C}_s$, being an isometry,  preserves the inner product between pure states\footnote{In fact,  as is apparent from definition~\ref{def:iso1},  the isometry~$\mathcal{C}_S$ merely `repeats' the weight label~$m$ for each eigenket~$|j, \lambda; m\rangle_S$ by attaching to it the ket~$|m\rangle$, i.\ e.\ ~$|j, \lambda; m\rangle_s \mapsto |j, \lambda; m\rangle_s \otimes |m\rangle$.}.         
It follows that the spectral decomposition of the image states 
 are ~$\mathcal{C}_s(\rho)=\sum_i p_i |\tilde{\psi}_i\rangle \langle \tilde{\psi}_i|$ and~$\mathcal{C}_s(\sigma)=\sum_i q_i |\tilde{\phi}_i\rangle \langle \tilde{\phi}_i|$, where~$|\tilde{\psi}\rangle$, and~$|\tilde{\phi}\rangle$ are themselves the images of~$|\psi\rangle$ and~$|\phi\rangle$,~i.\ e.\ , 
$
|\psi\rangle \xrightarrow{\mathcal{C})_s} |\tilde{\psi}\rangle, \;\; |\phi\rangle \xrightarrow{\mathcal{C}_s} |\tilde{\phi}\rangle.   
$    
Hence,  for every two states~$\rho$ and~$\sigma$, the two relative entropies~$S(\rho\parallel \sigma)$ and~$S\left(\mathcal{C}_s(\rho)\parallel \mathcal{C}_s(\sigma)\right)$ must be equal. Two corollaries follow:   

\begin{corollary}
\label{cor:ARWAG}
 For every~$s \in S$, the functions~$A_R^{s*}$ and~$A_G$ are identical,
~$
A_R^{s*}\equiv A_G. 
$
\end{corollary}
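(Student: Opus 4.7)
The plan is to reduce $A_R^{s*}$ directly to $A_G$ by exploiting two properties of $\mathcal{C}_s$: its linear invertibility on its image, and its isometric character. The central observation is that, because $\mathcal{C}_s$ is an isometry, the relative entropy is invariant under it, i.e.\ $S(\rho\parallel\sigma)=S(\mathcal{C}_s(\rho)\parallel\mathcal{C}_s(\sigma))$ for every pair of states $\rho,\sigma\in\mathcal{B}(\mathscr{H})$. Once this invariance is established, and one notes that $\mathcal{C}_s$ restricts to a bijection between $\text{INV}$ and $\mathcal{C}_s[\text{INV}]$, the identity $A_R^{s*}=A_G$ follows from a one-line change of variables in the minimization.

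For the isometric invariance I would use the spectral decompositions $\rho=\sum_i p_i|\psi_i\rangle\langle\psi_i|$ and $\sigma=\sum_j q_j|\phi_j\rangle\langle\phi_j|$ already highlighted in the paragraph preceding the corollary. Because $\mathcal{C}_s$ is linear and preserves inner products between pure states, $\mathcal{C}_s(\rho)$ and $\mathcal{C}_s(\sigma)$ have the same eigenvalues $\{p_i\}$ and $\{q_j\}$, with eigenvectors $|\tilde\psi_i\rangle=\mathcal{C}_s(|\psi_i\rangle)$ and $|\tilde\phi_j\rangle=\mathcal{C}_s(|\phi_j\rangle)$ whose overlaps satisfy $\langle\tilde\psi_i|\tilde\phi_j\rangle=\langle\psi_i|\phi_j\rangle$. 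Since the relative entropy admits the expansion
\[
S(\rho\parallel\sigma)=\sum_i p_i\log p_i \;-\; \sum_{i,j} p_i\,|\langle\psi_i|\phi_j\rangle|^2 \log q_j,
\]
which depends only on these eigenvalues and overlaps, the invariance follows immediately.

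With invariance in hand the corollary reduces to
\begin{align*}
A_R^{s*}(\rho) &= \min_{\sigma\in\mathcal{C}_s[\text{INV}]} S\!\left(\mathcal{C}_s(\rho)\,\parallel\,\sigma\right) \\
&= \min_{\tau\in\text{INV}} S\!\left(\mathcal{C}_s(\rho)\,\parallel\,\mathcal{C}_s(\tau)\right) \\
&= \min_{\tau\in\text{INV}} S\!\left(\rho\,\parallel\,\tau\right) \;=\; A_G(\rho),
\end{align*}
where the second equality uses the bijection $\mathcal{C}_s:\text{INV}\to\mathcal{C}_s[\text{INV}]$, and the third uses the isometric invariance established above. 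The only real obstacle is this isometric invariance of the relative entropy; everything else is bookkeeping. A minor subtlety to check is that the minima are genuinely attained (rather than mere infima), but this follows from the compactness of $\text{INV}$ (as a closed subset of the compact state space, being the image of $\mathcal{G}$) together with the lower semicontinuity of $S(\cdot\parallel\cdot)$, matters already implicit in the ``$\min$'' appearing in Eq.~(\ref{def:AG}).
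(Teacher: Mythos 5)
Your argument is correct and follows essentially the same route as the paper: both rest on the observation that $\mathcal{C}_s$, being an isometry, preserves spectral data and pure-state overlaps, hence $S(\rho\parallel\sigma)=S(\mathcal{C}_s(\rho)\parallel\mathcal{C}_s(\sigma))$, after which the identification of the two minimizations over $\text{INV}$ and $\mathcal{C}_s[\text{INV}]$ is immediate. Your explicit eigenvalue--overlap formula for the relative entropy and the remark on attainment of the minima are welcome elaborations of steps the paper leaves implicit, but they do not change the substance of the argument.
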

\begin{corollary}
\label{cor:ARsupAG}
 The~$G$-asymmetry is greater than or equal to the REA.  
 \begin{align}
 A_R^{\max} (\rho)  \leq A_G (\rho), \;\; \forall \rho \in \mathcal{B}(\mathscr{H}).
 \end{align} 
\end{corollary}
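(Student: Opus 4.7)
The proof plan is almost entirely a matter of chaining the two results that immediately precede the corollary, so I would present it as a one-line deduction with a short explanatory paragraph justifying why the inequality is the natural one (and not an equality).

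First I would fix an arbitrary $s \in S$ (the identity element $e$ is the most natural choice, but any $s$ works). Proposition~\ref{prop:AAs} then gives
\begin{equation*}
A_R^{\max}(\rho) \;\leq\; A_R^{s*}(\rho)
\end{equation*}
for every $\rho \in \mathcal{B}(\mathscr{H})$. Next I would invoke Corollary~\ref{cor:ARWAG}, which identifies $A_R^{s*}$ with the $G$-asymmetry, namely $A_R^{s*}(\rho) = A_G(\rho)$. Substituting into the previous inequality yields the desired bound $A_R^{\max}(\rho) \leq A_G(\rho)$, which is exactly the claim.

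Conceptually, I would also add a sentence pointing out why this inequality need not be saturated: in defining $A_R^{s}$ the infimum ranges over all of $\mathrm{SEP}$, whereas in computing $A_G$ (equivalently $A_R^{s*}$) the infimum ranges only over $\mathcal{C}_s[\mathrm{INV}]$, which, by the discussion leading to Proposition~\ref{prop:AAs}, is a strict subset of $\mathrm{SEP}_s \subseteq \mathrm{SEP}$ whenever $G$ is non-Abelian. Minimizing relative entropy over a larger set can only decrease its value, so $A_R^{s} \leq A_R^{s*}$, and taking the maximum over $s \in S$ preserves the inequality. This is really the only substantive content, and no hidden obstacle is expected: the heavy lifting (isometric preservation of relative entropy between $\rho,\sigma$ and $\mathcal{C}_s(\rho),\mathcal{C}_s(\sigma)$, as well as the set inclusions) has already been done in Proposition~\ref{prop:AAs} and Corollary~\ref{cor:ARWAG}. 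Hence the proof is a direct combination of the two, with no further computation required.
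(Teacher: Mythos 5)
Your proof is correct and follows exactly the route the paper intends: chaining Proposition~\ref{prop:AAs} ($A_R^{\max} \leq A_R^{s*}$) with Corollary~\ref{cor:ARWAG} ($A_R^{s*} \equiv A_G$) for any fixed $s \in S$. Your added remark on why the inequality need not be tight matches the paper's own discussion of the set inclusion $\mathcal{C}_s[\mathrm{INV}] \subset \mathrm{SEP}_s \subseteq \mathrm{SEP}$, so nothing is missing.
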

The relationship between~$G$-asymmetry and the REA goes deeper than what we have discussed so far, and  
our discussion here must be viewed only as an introductory treatment of the subject.  We leave a more complete discussion to future works.          

\section{Other Entanglement-Based  Selection Rules and Conservation Laws}
  
In this section, we consider a different isometry that has been used implicitly in the literature concerning symmetry and quantum reference frames~\cite{BRS07, GS08, GMS09}. 
The isometry is quite natural to consider, but as we will see, in general, it is not a LOCC-simulating isometry. Nevertheless, we will show that it still leads to new and independent necessary conditions for the manipulation of asymmetric states. 

We start by considering the Hilbert space decomposition of Eq.~(\ref{eq:Hj}). Irreps carrying subspaces~$\mathscr{H}_{j, \lambda}$ for fixed~$j$ are equivalent. Their direct sum, 
\begin{align}
\label{def:Hj}
\mathscr{H}_j:= \bigoplus_{\lambda} \mathscr{H}_{j,\lambda}
\end{align}
is isomorphic to~ 
$
\mathscr{H}_j \cong \mathscr{M}_j \otimes \mathscr{N}_j, 
$
where~$\mathscr{M}_j $ carries the~$j$'th irrep, and~$\mathscr{N}_j$ is the so called multiplicity space carrying the trivial representation of the group~\cite{BRS07}. It follows that~ 
$
\mathscr{H} \cong \mathscr{W}_{\mathcal{L}},
$
where, 
\begin{align}
\label{def:WL}
\mathscr{W}_{\mathcal{L}} := \bigoplus_j \mathscr{M}_j \otimes \mathscr{N}_j. 
\end{align}  
In~\cite{BRS07} the isomorphism of~$\mathscr{H}$ and~$\mathscr{W}_{\mathcal{L}}$ was assumed implicitly, but now we 
explicitly introduce the isometry connecting them.  
 
 \begin{definition}
\label{def:iso2} 
Let ~$\left\{ |j, m\rangle \right\}_m$ and~$\left\{ |j, \lambda \rangle \right\}_\lambda$ be the basis states spanning the spaces~$\mathscr{M}_j$ and~$\mathscr{N}_j$, respectively.  
Then~$\mathcal{L}: \mathcal{B}\left(\mathscr{H}\right) \rightarrow \mathcal{B}\left(\mathscr{W}_{\mathcal{L}} \right)$  
 is the isometry that maps,  
 \begin{align}
|j, \lambda; m\rangle \xrightarrow{\mathcal{L}}  |j, m\rangle \otimes |j,\lambda \rangle.  
 \end{align}
 \end{definition}  
Note that~$\mathscr{W}_{\mathcal{L}}\subset\mathscr{M} \otimes \mathscr{N}$, where
$
\mathscr{M}:=\bigoplus_j \mathscr{M}_j$ and~$\mathscr{N}:=\bigoplus_j \mathscr{N}_j
$.
Therefore, states in the image of $\mathcal{L}$ (i.e. states in $\mathscr{W}_{\mathcal{L}}$) can be viewed as bipartite states.
Moreover, if~$\rho$ is a~$G$-invariant state, then from Eq.~(\ref{eq:twirl2}) it follows that
 \begin{align}
 \mathcal{L}(\rho)=\sum_{j, \lambda} p_{j, \lambda} \left(\sum_m |j, m\rangle \langle j, m|\right)\otimes |j, \lambda\rangle\langle j,  \lambda|,  
 \end{align}
 which is a separable state~(see also~\cite{BRS07}). Similarly, any coherent superposition of states with different values of~$j$, 
 \begin{align}
 |\phi\rangle=\sum_{j,m,\lambda} c_{j, \lambda, m} |j, \lambda; m\rangle,   
 \end{align}
 is mapped to an entangled state, 
  \begin{align}
 |\tilde{\phi}\rangle=\sum_{j, m, \lambda} c_{j, \lambda, m} \:|j, m\rangle \otimes |j, \lambda\rangle.   
 \end{align}
Thus, ~$\mathcal{L}$ satisfies conditions $(2)$ and $(3)$ in Definition~\ref{def:lociso} of a LOCC-simulating isometry.  However, $\mathcal{L}$ is not a LOCC-simulating isometry since it does not in general satisfy condition~(1) of Definition~\ref{def:lociso},
as we show now for
the group~$G=SU(2)$. 

\subsection{$\mathcal{L}$ is not a LOCC-simulating isometry}
We now show that the entanglement of the bipartite states in the image of the isometry~$\mathcal{L}$ can in fact be increased by covariant transformations.   Consider the ${1/ 2}$-spin state~$\Psi=|\psi\rangle \langle \psi|$, where~$|\psi\rangle=|1/2; 1/2\rangle$, where we ignore the multiplicity index~$\lambda$, as it plays no role in what follows. Note that~$\mathcal{L}(\Psi)$ is a product state. Using Eq.~(\ref{eq:WE}), we see that the map~$\mathcal{E}_{1/2}$ takes~$\Psi$ to a state whose image is entangled. We only deal with fixed~$\alpha$ in~(\ref{eq:WE}), so we can remove it from our notation as well.  Consider the operator sum representation of the irreducible SU(2)-covariant map~$\mathcal{E}_{1/2}$ consisting of two Kraus operators~$K_{1/2, 1/2}$, and~$K_{1/2, -1/2}$. Because of the freedom in the choice of~$SU(2)$-covariant  Kraus operators,  we can choose them so that  they act on~$|\psi\rangle$ up to a normalization factor as     
\begin{align}
\label{eq:Kpsi}
K_{1/2, 1/2} |\psi\rangle &\propto |1; 1\rangle \xrightarrow{\mathcal{L}} |1;1\rangle \otimes|1\rangle ,  \nonumber \\
K_{1/2, -1/2} |\psi\rangle &\propto  |1;0\rangle +  |0;0\rangle \xrightarrow{\mathcal{L}}  |1;0\rangle\otimes|1\rangle +  |0;0\rangle\otimes |0\rangle.   
\end{align}
The state ~$\mathcal{L}\left(\mathcal{E}_{1/2}(\Psi)\right)$ is an equal mixture of the two states in the r.\ h.\ s.\ of Eq.~(\ref{eq:Kpsi}) and is thus an entangled state.     
It follows that the transformation    
\begin{align}
\mathcal{L}(\Psi) \mapsto \mathcal{L}\left(\mathcal{E}_{1/2}(\Psi)\right),
\end{align}
cannot be accomplished by LOCC.  

\subsection{Necessary conditions for the manipulation of asymmetric states}

Our motivation for introducing the isometries between the original  and the Kronecker product Hilbert spaces is to learn about~$G$-covariant  transformations. In particular, we study how the entanglement of the image states change.  In order to better understand how the entanglement changes under the isometry~$\mathcal{L}$, we now focus on the form of the maps  that act on the image states and mimic~$G$-covariant transformations.   The Wigner-Eckart theorem implies that, up to a projection to the subspace~$\mathscr{W}_{\mathcal{L}}$ of Eq.~(\ref{def:WL}), those are separable maps, i.\ e.\  of the form 
$$
\tilde{\mathcal{E}}_{\text{sep}}(\bullet)= \sum_x  \tilde{V}_x \otimes \tilde{K}_x (\bullet)  \tilde{V}^{\dagger}_x \otimes \tilde{K}^{\dagger}_x.  
$$

To see this, let~$\Pi_{\mathscr{W}_{\mathcal{L}}}$ denote the projection to the~$\mathscr{W}_{\mathcal{L}}$-space. As we saw in section~\ref{subsec:ito}, every~$G$-covariant transformation can be constructed from a set of irreducible tensor operators~$K_{J,M, \alpha}$. So we need only consider how~$K_{J,M, \alpha}$ are mimicked in the~$\mathscr{W}_{\mathcal{L}}$-space.     
If~$\rho$ is mapped to~$\sigma$ by~$K_{J,M, \alpha}$~($\sigma$ is in general subnormalized), then~$\mathcal{L}(\rho)$ is mapped to~$\mathcal{L}(\sigma)$ by the operator, 
\begin{align}
\label{eq:PWVK}
\tilde{K}_{J,M,\alpha}:= \tilde{V}_{J,M} \otimes \tilde{K}_{J,\alpha},  
\end{align}
followed by~$\Pi_{\mathscr{W}_{\mathcal{L}}}$.  The matrix elements of ~$\tilde{V}_{J,M}$ and~$\tilde{K}_{J,\alpha}$ are,  following the Wigner-Eckart theorem,   equal to the 
CG coefficient and the reduced matrix respectively,   
\begin{align}
\label{eq:CG}
&\langle j_2, m_2| \:\tilde{V}_{JM}\:|j_1,m_1\rangle= 
\left(
\begin{matrix}
  j_1 & J \\
  m_1 & M
 \end{matrix}
\right |
\left. 
 \begin{matrix}
 j_2  \\
  m_2
 \end{matrix}
 \right),
  \nonumber \\  
&\langle  j_2,\lambda_2 |\: \tilde{K}_{J,\alpha}\: |j_1,\lambda_1 \rangle =\; \langle j_2, \lambda_2\parallel K_{J,\alpha} \parallel j_1, \lambda_1\rangle.    
\end{align}   
  Again, here we consider only simply-reducible groups. For the generalization of the results of this section to all semi-simple Lie groups see Appendix~\ref{A1}.      

The entanglement of the image states can be increased only because of the projection~$\Pi_{\mathscr{W}_{\mathcal{L}}}$ in Eq.~(\ref{eq:PWVK}).  We can express the projection as~$\Pi_{\mathscr{W}_{\mathcal{L}}}=\sum_{j} \Pi_{j}$, where~
\begin{align}
\label{eq:pj}
\Pi_{j}=&\Pi_{\mathscr{M}_j} \otimes \Pi_{\mathscr{N}_j} \nonumber \\
:=& \sum_{m} |j,m\rangle \langle j,m| \otimes  \sum_{\lambda} |j, \lambda\rangle \langle j, \lambda|.    
\end{align}
Responsible for creating or increasing the entanglement are the cross terms ~$\Pi_j$ and~$\Pi_{j'}$ acting on both sides of~$\mathcal{L}(\rho)$ as 
\begin{align}
\label{eq:KW}
\mathcal{L}(\rho) \mapsto  \Pi_{\mathscr{W}_{\mathcal{L}}}\tilde{K}_{J,M,\alpha} \mathcal{L}(\rho) \; \tilde{K}^{\dagger}_{J,M,\alpha} \Pi_{\mathscr{W}_{\mathcal{L}}}.   
\end{align} 

In order to get rid of the cross terms, we proceed as follows:  Assume a given~$G$-covariant CP-map~$\mathcal{E}$ acting on~$\rho$, and the corresponding map on the bipartite state,  
\begin{align}
\label{eq:EPW}
 \tilde{\mathcal{E}} \left[\mathcal{L}(\rho)\right] = \Pi_\mathscr{W_{\mathcal{L}}} \left(\tilde{\mathcal{E}}_{\text{sep}} \left[\mathcal{L}(\rho)\right] \right) \Pi_\mathscr{W_{\mathcal{L}}},   
\end{align}
where~$\tilde{\mathcal{E}}_{\text{sep}}$ has an operator sum representation in terms of Kraus operators defined in Eq.~(\ref{eq:PWVK}).  
If, instead we consider the transformation 
\begin{align}
\label{eq:sigmabar}
\mathcal{L}(\rho) \mapsto \bar{\sigma}&=\sum_{j} \Pi_{j} \tilde{\mathcal{E}} \left[\mathcal{L}(\rho)\right]  \Pi_{j}\nonumber \\
&=\sum_{j} \Pi_{j} \left( \tilde{\mathcal{E}}_{\text{sep}} \left[\mathcal{L}(\rho)\right] \right)  \Pi_{j}. 
\end{align}
then the overall map remains a separable one. Note that the ~$\Pi_j$ are themselves separable. In fact, the transformation in~(\ref{eq:sigmabar}) can be implemented by LOCC. The reason is this: The superoperator~$\tilde{\mathcal{E}}_{\text{sep}}$  is comprised of operators~$\tilde{V}_{J,M} \otimes \tilde{K}_{J,\alpha}$. The projections~$ \Pi_{\mathscr{M}_j} \tilde{V}_{J,M}$  are unitary operators acting on the irrep-subspace~$\mathscr{M}_j$, as  their matrix elements are simply the CG-coefficients corresponding to a change of basis in~$\mathscr{M}_j$. Thus,  the whole transformation can be implemented by a series of local measurements by Alice, corresponding to operators~$\Pi_{\mathscr{N}_j} \tilde{K}_{J,\alpha}$, followed by the unitaries~$ \Pi_{\mathscr{M}_j} \tilde{V}_{J,M}$ performed by Bob.  

 It follows that the average entanglement of the state~$\tilde{\sigma}$ cannot exceed the entanglement of the initial state~$ \mathcal{L}(\rho)$. We state the result in the following proposition. 
\begin{proposition}
\label{prop:ave}
Let~$E$ be an ensemble entanglement monotone. We further assume that~$E$ is faithful and convex.  The ~$G$-covariant transformation~$\rho \mapsto \sigma$ is possible only if the following condition holds, 
\begin{align}
\label{eq:EE}
E\left(\mathcal{L}(\rho)\right) \geq E(\bar{\sigma}). 
\end{align}
\end{proposition}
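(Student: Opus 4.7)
The plan is to exhibit $\bar{\sigma}$ as the output of a genuine LOCC protocol applied to $\mathcal{L}(\rho)$ with respect to the bipartition $\mathscr{M}\otimes\mathscr{N}$, and then combine the LOCC ensemble monotonicity of $E$ with its convexity to conclude. All of the heavy lifting has already been assembled in the paragraphs leading up to the proposition; what remains is to organize it carefully.

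First I would unpack Eq.~(\ref{eq:sigmabar}) using $\Pi_j=\Pi_{\mathscr{M}_j}\otimes\Pi_{\mathscr{N}_j}$ from Eq.~(\ref{eq:pj}) together with the product form of the Kraus operators of $\tilde{\mathcal{E}}_{\text{sep}}$ given by Eq.~(\ref{eq:PWVK}). This yields
\begin{equation}
\bar{\sigma}=\sum_{j,J,M,\alpha} \tilde{L}_{j,J,M,\alpha}\,\mathcal{L}(\rho)\,\tilde{L}^{\dagger}_{j,J,M,\alpha},
\end{equation}
with the product Kraus operators
\begin{equation}
\tilde{L}_{j,J,M,\alpha}:=\bigl(\Pi_{\mathscr{M}_j}\tilde{V}_{J,M}\bigr)\otimes\bigl(\Pi_{\mathscr{N}_j}\tilde{K}_{J,\alpha}\bigr),
\end{equation}
so the effective channel from $\mathcal{L}(\rho)$ to $\bar{\sigma}$ is manifestly separable. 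The key observation used in the paragraph preceding the proposition is that $\Pi_{\mathscr{M}_j}\tilde{V}_{J,M}$ acts as a unitary on $\mathscr{M}_j$, because its matrix elements are Clebsch--Gordan coefficients implementing a change of coupled basis within $\mathscr{M}_j$. Interpreting the outcome label $(j,J,M,\alpha)$ as classical data sent from Alice (who owns $\mathscr{N}$) to Bob (who owns $\mathscr{M}$), the protocol can be realized as one-way LOCC: Alice performs a measurement whose Kraus operators are the $\Pi_{\mathscr{N}_j}\tilde{K}_{J,\alpha}$ pieces, broadcasts her outcome, and Bob conditionally applies the unitary $\Pi_{\mathscr{M}_j}\tilde{V}_{J,M}$. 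This produces a post-measurement ensemble $\{p_x,\sigma_x\}$ whose average over $x$ equals $\bar{\sigma}$.

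With the LOCC realization in hand, ensemble monotonicity of $E$ gives
\begin{equation}
E\!\left(\mathcal{L}(\rho)\right)\;\geq\;\sum_x p_x\,E(\sigma_x),
\end{equation}
while convexity of $E$ gives
\begin{equation}
\sum_x p_x\,E(\sigma_x)\;\geq\;E\!\left(\sum_x p_x\,\sigma_x\right)=E(\bar{\sigma}),
\end{equation}
and chaining the two inequalities yields Eq.~(\ref{eq:EE}).

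The main technical obstacle is verifying that the proposed protocol is a bona fide measurement-and-broadcast procedure. Concretely, one must check that the family $\{\Pi_{\mathscr{N}_j}\tilde{K}_{J,\alpha}\}_{j,J,\alpha}$, augmented as needed to absorb the $M$-label (since Alice's Kraus operators are $M$-independent while Bob's conditional unitary is not), satisfies the POVM completeness relation. This reduces to the trace-preservation of the underlying $G$-covariant channel $\mathcal{E}$ together with the unitarity of $\Pi_{\mathscr{M}_j}\tilde{V}_{J,M}$ on $\mathscr{M}_j$, so the check is mechanical but deserves care. I note that the faithfulness hypothesis on $E$ is not actually consumed anywhere in the inequality chain above; I suspect it is included to guarantee that the resulting necessary condition has nonvacuous content, i.e.\ that $E(\bar{\sigma})>0$ is actually witnessed whenever $\bar{\sigma}$ carries genuine entanglement across the $\mathscr{M}\otimes\mathscr{N}$ cut.
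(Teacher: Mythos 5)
Your proposal is correct and follows essentially the same route as the paper: the paper's own proof simply invokes the LOCC implementability of the map in Eq.~(\ref{eq:sigmabar}) (Alice measuring with the $\Pi_{\mathscr{N}_j}\tilde{K}_{J,\alpha}$ pieces and Bob applying the conditional $\Pi_{\mathscr{M}_j}\tilde{V}_{J,M}$, exactly as you describe), with the ensemble-monotonicity-plus-convexity chain you write out being the intended, if unstated, final step. Your observation that faithfulness is never used in the inequality is also consistent with the paper, which does not invoke it in the argument.
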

\begin{proof}
The proposition is an immediate consequence of the fact that the transformation in~(\ref{eq:sigmabar}) can be implemented by LOCC.  
\end{proof}

Can we restate the condition of Eq.~(\ref{eq:EE})  in terms of new asymmetry monotones?  
Let us define the average initial state as, 
\begin{align}
\label{eq:rhobar}
\bar{\rho}:= \sum_{j} \Pi_{j} \mathcal{L}(\rho) \: \Pi_{j}. 
\end{align}
Clearly, the entanglement~$E(\mathcal{L}(\rho)) \geq E\left(\bar{\rho}\right)$.  
But does~$E\left(\bar{\rho}\right)$ exceed~$E\left(\bar{\sigma}\right)$ as well?  If this were true, then we could define an ensemble asymmetry monotone as~$A^{\text{ave}}_{E} (\rho):= E\left(\bar{\rho}\right)$ after all. However, this is not the case.  Consider the group~$G=SU(2)$, and let~$\rho=|\phi\rangle \langle \phi|$, where,  
$$
|\phi\rangle:= \frac{1}{\sqrt{2}} |3/2; 1/2\rangle + \frac{1}{\sqrt{2}} |1/2; 1/2\rangle.
$$ 
The image state, ~$\mathcal{L}(\rho)=|\tilde{\phi}\rangle \langle \tilde{\phi}|$, where, 
$$
|\tilde{\phi}\rangle:= \frac{1}{\sqrt{2}} |3/2; 1/2\rangle\otimes |3/2\rangle + \frac{1}{\sqrt{2}} |1/2; 1/2\rangle\otimes |1/2\rangle.
$$
is an entangled state.   
Also consider the irreducible~$SU(2)$-covariant CP-map~$\mathcal{E}_{1/2}$ (see section~\ref{subsec:ito}).  The state~$\bar{\rho}$ is a separable state, whereas the ensuing state~$\bar{\sigma}$ of Eq.~(\ref{eq:sigmabar}) is entangled. In other words, 
 $$
 E(\bar{\sigma}) \nleq E(\bar{\rho})=0.   
 $$   
Note that, in accordance with Proposition~\ref{prop:ave}, it is still true that~
$
0<E\left(\bar{\sigma}\right) \leq E\left(\mathcal{L}(\rho)\right).  
$  

In summary,  proposition~\ref{prop:ave} provides a necessary condition that all~$G$-covariant transformations must satisfy. Let us call such a necessary  condition a  \emph{general selection rule}. We have shown that the general selection rule in proposition~\ref{prop:ave} is not expressible in terms of asymmetry monotones of the initial and final states, but it is expressible in terms of the entanglement of their image states. This is an example of how asymmetry monotones are not the only relevant quantities in the study of symmetries of open systems.

\subsection{Conserved quantities}
 If we further restrict ourselves to \emph{reversible}~$G$-covariant transformations, still more interesting results can be deduced from the~$\mathcal{L}$-isometry. Unitary operations have only one Kraus operator. If~$G$ is non-Abelian, ~$G$-covariant unitaries exist only among~$G$-covariant transformations labeled by the identity representation,~$J=0$,  denoted by~$\mathcal{E}_{0,\alpha}=\mathcal{K}_{0,0,\alpha}$~(We consider the case of Abelian groups in Appendix~\ref{A2}.).  
 
 The unitary~$\mathcal{K}_{0,0,\alpha}$ maps each subspace~$\mathcal{H}_{j}$ in~(\ref{def:Hj})  to itself, and the corresponding bipartite operator~$\tilde{K}_{0,0,\alpha}$ has the form, 
 \begin{align}
 \tilde{K}_{0,0,\alpha}=\left(\sum_{j} \Pi_{\mathscr{M}_j}\right) \otimes \tilde{K}_{0,\alpha}.  
 \end{align}  
The above form is a direct consequence of the CG-coefficients in Eq.~(\ref{eq:CG}) for the case~$J=M=0$.  

Substituting for~$\mathcal{E}$ in Eq.~(\ref{eq:EPW}) shows that in this case the overall projection~$\Pi_{\mathscr{W}}$ into the subspace~$\mathscr{W}_{\mathcal{L}}$ can be dropped, because~$\tilde{K}_{0,0,\alpha}$ maps~$\mathscr{W_{\mathcal{L}}}$ to itself. Equivalently,~$\Pi_{\mathscr{W}_{\mathcal{L}}} \Pi_{j}=\Pi_{j}$, so that, 
$$
\Pi_{\mathscr{W}_{\mathcal{L}}} \tilde{K}_{0,0,\alpha} =\tilde{K}_{0,0,\alpha}.  
 $$
 The operator~$\tilde{K}_{0,0,\alpha}$ is of course a local unitary. It thus follows that for every reversible~$G$-covariant transformation~$\mathcal{E}$,  the entanglement of the image state in Eq.~(\ref{eq:EPW}) remains constant. In other words, we have identified a conserved quantity. 

\begin{proposition}
\label{prop:con}
For reversible~$G$-covariant transformations,~$\mathcal{E}_{0, \alpha}$, the function, 
\begin{align}
L(\rho):= E\left(\mathcal{L}({\rho})\right), 
\end{align} 
is a conserved quantity. 
\end{proposition}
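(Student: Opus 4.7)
The plan is to assemble the pieces already laid out just before the statement and observe that the reversibility assumption forces the bipartite map on the image space to be a local unitary, at which point $E$ is preserved by standard properties of entanglement monotones.

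First I would invoke the preceding observation that, for a non-Abelian compact semi-simple $G$, any $G$-covariant unitary (a CP map with a single Kraus operator) must belong to the trivial sector $J=0$, so $\mathcal{E}_{0,\alpha}(\rho)=K_{0,0,\alpha}\rho K_{0,0,\alpha}^{\dagger}$. Then I would apply the Wigner--Eckart formula in the form derived in Eqs.~(\ref{eq:PWVK})--(\ref{eq:CG}) to $\tilde{K}_{0,0,\alpha}$. The key computation is that for $J=M=0$ the generalized CG coefficient collapses to $\left(\begin{smallmatrix}j & 0\\ m & 0\end{smallmatrix}\right|\left.\begin{smallmatrix}j'\\ m'\end{smallmatrix}\right)\propto \delta_{jj'}\delta_{mm'}$, so $\tilde{V}_{0,0}$ acts as the identity on each irrep block $\mathscr{M}_j$, hence $\tilde{V}_{0,0}=\sum_j \Pi_{\mathscr{M}_j}=\mathbbm{1}_{\mathscr{M}}$ on the relevant support. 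Consequently $\tilde{K}_{0,0,\alpha}=\mathbbm{1}_{\mathscr{M}}\otimes \tilde{K}_{0,\alpha}$.

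Next I would verify that the projection $\Pi_{\mathscr{W}_{\mathcal{L}}}$ appearing in Eq.~(\ref{eq:EPW}) is redundant in this case. Since $\tilde{K}_{0,0,\alpha}$ preserves each $\mathscr{M}_j\otimes\mathscr{N}_j$ block (it does nothing on the $\mathscr{M}$ factor and acts only within $\mathscr{N}$ via $\tilde{K}_{0,\alpha}$), it leaves the image subspace $\mathscr{W}_{\mathcal{L}}=\bigoplus_j \mathscr{M}_j\otimes\mathscr{N}_j$ invariant, i.e.\ $\Pi_{\mathscr{W}_{\mathcal{L}}}\tilde{K}_{0,0,\alpha}=\tilde{K}_{0,0,\alpha}$. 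I would also note that reversibility of $\mathcal{E}_{0,\alpha}$ on $\mathscr{H}$ forces $\tilde{K}_{0,\alpha}$ to be unitary on $\mathscr{N}$, since otherwise the induced map on $\mathscr{W}_{\mathcal{L}}$ (and hence, via $\mathcal{L}^{-1}$, on $\mathscr{H}$) would not be invertible.

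At this point the induced transformation on $\mathcal{L}(\rho)$ is simply conjugation by the bipartite local unitary $\mathbbm{1}_{\mathscr{M}}\otimes \tilde{K}_{0,\alpha}$, so by the LOCC-monotonicity of $E$ applied in both directions one obtains $E(\mathcal{L}(\sigma))=E(\mathcal{L}(\rho))$ for $\sigma=\mathcal{E}_{0,\alpha}(\rho)$, which is exactly $L(\rho)=L(\sigma)$. I expect the main (and essentially only) subtlety to be the justification that $\tilde{V}_{0,0}$ is a genuine unitary on $\mathscr{M}$ rather than merely a partial isometry, which requires the $J=M=0$ CG identity above together with the convention, fixed by the Wigner--Eckart theorem, that the relevant matrix element equals a nonzero constant for $j=j'$, $m=m'$; everything else follows mechanically from the structure already displayed in the text.
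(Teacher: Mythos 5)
Your proposal is correct and follows essentially the same route as the paper: the $J=M=0$ Clebsch--Gordan coefficients make $\tilde{V}_{0,0}=\sum_j\Pi_{\mathscr{M}_j}$, so $\tilde{K}_{0,0,\alpha}$ is a local unitary preserving $\mathscr{W}_{\mathcal{L}}$, the projection $\Pi_{\mathscr{W}_{\mathcal{L}}}$ becomes redundant, and monotonicity of $E$ in both directions gives conservation. Your added remarks on the unitarity of $\tilde{K}_{0,\alpha}$ on $\mathscr{N}$ and on $\tilde{V}_{0,0}$ being a genuine identity on each irrep block simply make explicit what the paper asserts directly.
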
     
Hence, we have obtained new conservation laws for closed systems. The new conservation laws are not of the form of the expectation value of a generator of a Hamiltonian symmetry, but are instead in terms of entanglement monotones.  In the case of open systems and irreversible transformations, the conservation law is replaced with a general selection rule,   again in terms of entanglement monotones.

\section{Conclusion}
The present paper contains two major innovations: first, the notion of using of local operations to simulate symmetric dynamics, and  second,  the idea of applying  the well-known and well-studied  resource theory of entanglement to a totally different resource theory.  
Symmetric time evolutions described by covariant transformations are based on group structures, invariant subspaces and representation theory. It is not evident, at first,  that such structures have any connections to local operations and   tensor products of two or more systems. However, the link exists, and once found, is actually very simple. 
We found that the effect of an irreducible covariant operator on a ket~$|m\rangle$, labeled by the weight~$m$ of the algebra (ignoring the other labels),  is a simple translation by some fixed amount~$M$, ~$|m\rangle \rightarrow |m+M\rangle$. 
Thus, the local operators that simulate the~$G$-covariant transformations exploit a common feature of all Lie groups, i.\ e.\  how the weights are transformed. 

In this lies the strength of our method, as 
it applies equally to \emph{all Lie} groups and links them all to a sub-class of  local operations. In turn, this enables entanglement theory, as the resource theory of the restriction to LOCC, to be applied to the study of covariant transformations, irrespective of the symmetry group involved.   
Entanglement has been the focus of intense study and plays a central role in quantum information theory. This fact is reflected in the abundance of well investigated entanglement measures and monotones, each of which can now be used to extract information about the asymmetry of quantum states. One important consequence has been the realization that, for closed systems, entanglement serves as a conserved quantity, or a constant of motion.

There are various directions one can go from here.  First, we can ask what do entanglement considerations tell us about the specifics of~$G$-covariant transformations?  For example,  majorization of the Schmidt coefficients of the final pure state by the  coefficients of the initial state  is the necessary and sufficient condition for pure state to pure state transitions under LOCC.   If we apply the majorization condition to the images of the initial and final states for different isometries~$\mathcal{C}_g$, would we retrieve the exact form of the corresponding~$G$-covariant transformations? 


A second line of study concerns the case of finite groups.  
The isometries we introduced derive from the form of the Wigner-Eckart theorem for Lie groups.   
For finite groups, the form of the Wigner-Eckart theorem is  different and more complicated~\cite{Koster58}. 
If the finite-group version of the Wigner-Eckart theorem lends itself to the construction of LOCC-simulating isometries, then entanglement theory can be directly applied to finite groups as well. 

On a different note, we have not considered the case of many-copy transformations and asymptotic limits in the present paper. Many questions of interest can be asked in this respect, including additivity of the measure and possible applications to the problem of distillation of asymmetry resources. 

Finally, a fourth direction for future research  suggested by our result is to look for similar conditions in other resource theories.  For example, the restriction to Gaussian operations  results in a new resource theory where non-Gaussian states are resources~\cite{GC02}. Another example is thermodynamics. Thermodynamics has been recognized as an energy preserving resource theory where transformations are restricted to operations that do not increase the total energy~\cite{HO11}, and already, connections between thermodynamics, viewed as a resource theory, and entanglement have been demonstrated~\cite{HO02, H08, HO11}.  If the restricted set of operations in any of those resource theories are simulated by local operations, then it would be possible to employ entanglement theory to the study of those resource theories as well.

\section{acknowledgments}   
We appreciate valuable discussions with Varun Narasimhachar and Iman Marvian. BT also appreciates valuable discussions with P.\  S.\ Turner. This research has been supported by Alberta Innovates,   
the Natural Sciences and Engineering Research Council,
General Dynamics Canada,
and the Canadian Centre of Excellence for Mathematics of Information Technology and Complex Systems (MITACS).

\appendix
\section{Generalized Wigner-Eckart Theorem In The Presence Of Outer Multiplicities}
\label{A1}
The main results of the paper can be extended to the general case where the Kronecker product of the algebra associated with the group is not simply reducible. An algebra~$H$ is not simply reducible when the algebra has outer  multiplicities, i.\ e.\ multiplicities arising due to the coupling of the irreps. 
We now consider the general form of the Wigner-Eckart theorem,     
\begin{align}
\label{eq:A-WE}
\langle  j',\lambda'; m' |K_{J,M,\alpha}& |j, \lambda; m \rangle= \nonumber \\ 
&\sum_{\mu} \left(
 \begin{matrix}
  j & J \\
  m & M
 \end{matrix}
\right |
\left. 
 \begin{matrix}
 j',   \mu  \\
  m'
 \end{matrix}
 \right)
\: \langle j', \lambda' \parallel K_{J, \alpha} \parallel j, \lambda\rangle_{\mu}, 
\end{align}
where~$\mu$ is the outer multiplicity index for the irrep~$[j']$ due to the coupling~
$
[j] \otimes [J] \mapsto [j'].  
$ 
Here, we have used the symbol~$[j]$ to denote the representation labeled by~$j$, and similarly for other representations.  The terms~
$
\left(
\begin{matrix}
  j & J \\
  m & M
 \end{matrix}
\right |
\left. 
 \begin{matrix}
 j',   \mu  \\
  m'
 \end{matrix}
 \right)
$   
are the general Clebsch-Gordan coefficients, depending, in the general case, on the outer multiplicity~$\mu$ in addition to the irrep and weight labels.

If the transformation~$K_{J,M,\alpha}$ is unitary,~$J$ and~$M$ still remain the labels of the identity representation,~$J=M=0$. Coupling to the identity representation never results in outer multiplicities. Thus, the results for~$G$-covariant unitaries in the paper is valid for the general case.

\subsection{The Set of Isometries~$\left\{\mathcal{C}_g \right\}$}

All the Clebsch-Gordan coefficients are identically zero unless, as before, the weights labelling the bra and the ket, and the tensor operator satisfy the relation, 
$$
m+M=m'. 
$$
It follows that, as far as the weights are concerned, the same translation operator as in Eq.~(\ref{def:T}) applies to all the terms in the r.\ h.\ s. of~(\ref{eq:A-WE}), and thus the same set of isometries~$\mathcal{C}$ and~$\mathcal{C}_g$ in the definitions~\ref{def:iso1} and~\ref{def:isog} of section~\ref{sec:sim} still satisfy all the conditions of a LOCC-simulating isometry of definition~\ref{def:lociso}. 

\subsection{The Isometry~$\mathcal{L}$}

The situation is more complicated for the isometry~$\mathcal{L}$. The existence of outer multiplicities implies that we must define new Hilbert spaces to embed the original Hilbert space, i.\ e.\ Hilbert spaces that include the outer multiplicities in the label of their basis states. Let~
$$
\mathscr{M}=\text{span} \left\{ |j,\mu ;m\rangle\right\}_{j,\mu, m}, 
$$ 
be the space spanned by the basis states~$ |j,\mu ;m\rangle$. Here,~$j$ and~$m$ are, as before,  the irrep label and the weight label respectively. We have included an additional label~$\mu$, ranging over~$\mu=0\cdots \infty$, that we will shortly relate to the outer multiplicities. Similarly, let
$$
\mathscr{N}=\text{span} \left\{ |j,\mu ; \lambda\rangle\right\}_{j,\mu, \lambda}, 
$$
where~$\lambda$ is the label for the (initial) irrep multiplicities.   Also, define~$\mathscr{M}_j=\text{span} \left\{|j, 0; m \rangle\right\}_{m}$ and~$\mathscr{N}_j=\text{span} \left\{|j, 0; \lambda \rangle\right\}_{\lambda}$, and  
$$
\mathscr{W}_{\mathcal{L}}:= \bigoplus_{j} \mathscr{M}_j \otimes \mathscr{N}_j. 
$$
 As before, we can define the isometry~$\mathcal{L}$ by specifying how it acts on the basis states. 

\begin{definition}
\label{def:isoLA} 
$\mathcal{L}: \mathcal{B}\left(\mathscr{H}\right) \rightarrow \mathcal{B}\left(\mathscr{W}_{\mathcal{L}} \right)$  
 is the isometry that maps,  
 \begin{align}
|j, \lambda; m\rangle \xrightarrow{\mathcal{L}}  |j,0; m\rangle \otimes |j,0; \lambda \rangle.  
 \end{align}
 \end{definition}  
Clearly,~$\mathscr{W}_{\mathcal{L}} \subset \mathscr{M} \otimes \mathscr{N}$, and thus the states in the image of $\mathcal{L}$ (i.e. states in $\mathscr{W}_{\mathcal{L}}$) are  bipartite states.
Let~$K_{J,M,\alpha}$ be an irreducible~$G$-covariant operator~(see Eq.~(\ref{def:irtenop}) in section~\ref{subsec:ito}). The operator acting on~$\mathscr{M}\otimes \mathscr{N}$ that mimics~$K_{J,M,\alpha}$ can again be expressed as a separable state followed by a projection to the image subspace~$\mathscr{W}_{\mathcal{L}}$.   
Assume~$\rho$ is mapped to (a in general subnormalized)~$\sigma$ by~$K_{J,M, \alpha}$.~$\mathcal{L}(\rho)$ is then mapped to~$\mathcal{L}(\sigma)$ by the operator, 
\begin{align}
\label{eq:PWVK-A1}
\tilde{K}_{J,M,\alpha}:= \tilde{V}_{J,M} \otimes \tilde{K}_{J,\alpha},  
\end{align}
followed by~$\Pi_{\mathscr{W}_{\mathcal{L}}}$.  The general form of the Wigner-Eckart theorem~(\ref{eq:A-WE}) implies,   
\begin{align}
\label{eq:CG}
&\langle j_2, \mu_2; m_2| \:\tilde{V}_{JM}\:|j_1, \mu_1; m_1\rangle= \: 
\left(
\begin{matrix}
  j_1 & J \\
  m_1 & M
 \end{matrix}
\right |
\left. 
 \begin{matrix}
 j_2,   \mu_2  \\
  m_2
 \end{matrix}
 \right)
, \nonumber \\  
&\langle  j_2,\mu_2; \lambda_2 |\: \tilde{K}_{J,\alpha}\: |j_1,\mu_1; \lambda_1 \rangle =\; \langle j_2, \lambda_2\parallel K_{J,\alpha} \parallel j_1, \lambda_1\rangle_{\mu_2}.   
\end{align}   
Note that  r.\ h.\ s.\  does not depend on the value of~$\mu_1$ in either equation. 
The map~$\Pi_{\mathscr{W}_{\mathcal{L}}}$ is $\Pi_{\mathscr{W}_{\mathcal{L}}}=\sum_{j, \mu} \Pi_{j, \mu}$, where~
\begin{align}
\label{eq:pj}
\Pi_{j, \mu}:=&\Pi_{\mathscr{M}_j} \otimes \Pi_{\mathscr{N}_j} \nonumber \\
:=& \sum_{m} |j,0;m\rangle \langle j,\mu;m| \otimes  \sum_{\lambda} |j, 0; \lambda\rangle \langle j, \mu; \lambda|.    
\end{align}
 for a given~$G$-covariant CP-map~$\mathcal{E}$ acting on~$\rho$, the corresponding map on the bipartite state is 
\begin{align}
\label{eq:EPW-A}
 \tilde{\mathcal{E}} \left[\mathcal{L}(\rho)\right] = \Pi_\mathscr{W_{\mathcal{L}}} \left(\tilde{\mathcal{E}}_{\text{sep}} \left[\mathcal{L}(\rho)\right] \right) \Pi_\mathscr{W_{\mathcal{L}}},   
\end{align}
where~$\tilde{\mathcal{E}}_{\text{sep}}$ has an operator sum representation in terms of Kraus operators defined in Eq.~(\ref{eq:PWVK}).  
$\tilde{\mathcal{E}} \left[\mathcal{L}(\rho)\right]$ is still not a LOCC-simulating CP-map. The cross terms~$\Pi_{j,\mu}$ and~$\Pi_{j',\mu'}$ acting on both sides of~$\mathcal{L}(\rho)$ render the overall CP-map a non-separable one. 
However, we can destroy the cross terms here too,  by applying the set of projections~$\Pi_{j, mu}$ separately on both sides and then taking the average of the maps,  as follows, 
\begin{align}
\label{eq:sigmabar-A}
\mathcal{L}(\rho) \mapsto \bar{\sigma}&=\sum_{j, \mu} \Pi_{j, \mu} \: \tilde{\mathcal{E}} \left[\mathcal{L}(\rho)\right] \:  \Pi_{j,\mu}\nonumber \\
&=\sum_{j,\mu} \Pi_{j,\mu} \left( \tilde{\mathcal{E}}_{\text{sep}} \left[\mathcal{L}(\rho)\right] \right)  \Pi_{j,\mu}. 
\end{align}
The overall map is separable now, and the condition~$E(\rho) \geq E(\bar{\sigma})$ must hold if the transition from~$\rho$ to~$\sigma$ is possible.

\section{Abelian Lie Groups}
\label{A2} 
The irreducible representations of Abelian groups are $1$-dimensional. The irrep label is always the highest weight and  $1$-dimensional  irreps have only one weight. Thus, the irrep label and the weight label are the same.  We use the label~$n$ for the irreps of an Abelian group, and to conform to the notation of the rest of the paper, we label the basis states as~$|n, \lambda; n\rangle$. We presently show that the results of the paper are greatly simplified in the case of Abelian groups. In particular, we show that the isometries~$\mathcal{C}_g$ are all equivalent with each other, and are furthermore equivalent to the isometry~$\mathcal{L}$. 

\begin{definition}
\label{def:isoCB} 
$\mathcal{C}: \mathcal{B}\left(\mathscr{H}\right) \rightarrow \mathcal{B}\left(\mathscr{W}_{\mathcal{L}} \right)$  
 is the isometry that maps,  
 \begin{align}
|n, \lambda; n\rangle \xrightarrow{\mathcal{C}}  |n,\lambda; n\rangle \otimes |n\rangle.  
 \end{align}
 \end{definition}  

\begin{definition}
\label{def:isoLB} 
$\mathcal{L}: \mathcal{B}\left(\mathscr{H}\right) \rightarrow \mathcal{B}\left(\mathscr{W}_{\mathcal{L}} \right)$  
 is the isometry that maps,  
 \begin{align}
|n, \lambda; n\rangle \xrightarrow{\mathcal{L}}  |n; n\rangle \otimes |n; \lambda \rangle.  
 \end{align}
 \end{definition}  
 
First, note that the action of a group element on the basis kets is to merely add a phase, 
$$
U(g)|n, n, \lambda\rangle = e^{\imath \theta_{g,n}} |n, n, \lambda\rangle.  
$$
Thus, the definition~\ref{def:isog} of~$\mathcal{C}_g$ implies,  
\begin{align}
\mathcal{C}_g \equiv \mathcal{C},\;\; \forall g \in G. 
\end{align}
 
The form of irreducible~$G$-covariant transformation is also simplified to, 
\begin{align}
\label{eq:WE-B}
\langle  n',\lambda'; n' |K_{N,N, \alpha}& |n, \lambda; n \rangle=
\delta_{n',n+N} \langle n', \lambda' \parallel K_{N, \alpha} \parallel n, \lambda\rangle, 
\end{align}
 or equivalently 
\begin{align}
\label{eq:KN-B}
	K_{N,N,\alpha}=\sum_n c^{(N, \alpha)}_{n, \lambda, \lambda'} \:|n+N, \lambda'; n+N \rangle\langle n, \lambda; n|,  
\end{align}
where~$ c^{(N, \alpha)}_{n, \lambda, \lambda'} = \langle n', \lambda' \parallel K_{N, \alpha} \parallel n, \lambda\rangle$. 

Assume~$\rho$ is mapped to (a in general subnormalized state)~$\sigma$ by~$K_{J,M, \alpha}$.~$\mathcal{C}(\rho)$ is then mapped to~$\mathcal{C}(\sigma)$ by the operator, 
\begin{align}
\label{eq:CK-B1}
\tilde{K}^{\mathcal{C}}_{N,N,\alpha}= K_{N,N,\alpha} \otimes \sum_{n} |n+N\rangle \langle n|.   
\end{align}

Equivalently,~$\mathcal{L}(\rho)$ is mapped to~$\mathcal{L}(\sigma)$ by the operator, 
\begin{align}
\label{eq:LK-B1}
&\tilde{K}^{\mathcal{L}}_{N,N,\alpha}= \nonumber \\
& \sum_{n} |n+N; n+N\rangle \langle n; n| \otimes \sum_n c^{(N, \alpha)}_{n, \lambda, \lambda'} \:|n+N; \lambda' \rangle\langle n; \lambda|. 
\end{align}

$\tilde{K}^{\mathcal{C}}_{N,N,\alpha}$ can be implemented by a LOCC-transformation, as~$\mathcal{C}$ is a  LOCC-simulating isometry. Now, interestingly, the simulating operator of the second isometry,~$\tilde{K}^{\mathcal{L}}_{N,N,\alpha}$ is implementable by LOCC-transformations as well. So in the case of the Abelian groups, the isometry~$\mathcal{L}$ is also a LOCC-simulating isometry. In fact, the forms of~$\tilde{K}^{\mathcal{C}}_{N,N,\alpha}$ and~$\tilde{K}^{\mathcal{L}}_{N,N,\alpha}$ are similar, both comprised of the tensor product of a copy of the original~$G$-covariant operator~$K_{N,N,\alpha}$ and  a translation operator, and the isometry 
$$
|n, \lambda; n\rangle \otimes |n\rangle \mapsto |n; n\rangle \otimes |n; \lambda\rangle 
$$
maps one set of LOCC-transformations to an equivalent set of  LOCC transformations. In this sense, the two isomtries~$\mathcal{C}$ and $\mathcal{L}$ are equivalent. 

The image state under either isometry is an entangled state if and only if the initial state has no coherence in~$n$, i.\ e.\ if the state is a coherent superposition of states with different values of~$n$. States acting on the original Hilbert space~$\mathscr{H}$  with no coherence in~$n$ are the~$G$-invariant states, as the twirling operations destroys the coherence in~$n$. 
\begin{proposition}
\label{prop:G-inv-B1}
If~$G$ is an Abelian group, then the image state~$\mathcal{C}(\rho)$ (or equivalently~$\mathcal{L}(\rho)$) is a separable state if and only if the initial state~$\rho \in \mathcal{B}(\mathscr{H})$ is~$G$-invariant.  
\end{proposition}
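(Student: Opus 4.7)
The plan is to prove the two directions separately for the isometry $\mathcal{C}$ of Definition~\ref{def:isoCB}, and then infer the statement for $\mathcal{L}$ from the explicit equivalence between the two isometries that was pointed out in the paragraph immediately preceding the proposition. The forward direction (invariant $\Rightarrow$ separable) is essentially already in Proposition~\ref{prop:C1}: twirling an invariant $\rho$ produces a mixture of projectors $\Pi_{n,\lambda}$, and its image under $\mathcal{C}$ is manifestly a separable classical-classical state $\sum_{n,\lambda} p_{n,\lambda}\,|n,\lambda;n\rangle\langle n,\lambda;n|\otimes|n\rangle\langle n|$. No new work is required for this direction.

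For the reverse direction I would first invoke (the Abelian specialization of) Proposition~\ref{prop:gen}: its proof goes through verbatim when the weight $m$ is replaced by the irrep label $n$, and shows that $\mathcal{C}(\rho)$ is separable if and only if $[\rho,\Pi_n]=0$ for every $n$, where $\Pi_n:=\sum_\lambda |n,\lambda;n\rangle\langle n,\lambda;n|$. The next step is the crucial Abelian observation: because every irrep is one-dimensional, $U(g)|n,\lambda;n\rangle=e^{i\theta_{g,n}}|n,\lambda;n\rangle$, so the full group representation decomposes as
\begin{equation*}
U(g)=\sum_{n}e^{i\theta_{g,n}}\,\Pi_{n}\qquad(\forall\,g\in G).
\end{equation*}
Consequently, if $\rho$ commutes with every $\Pi_n$ then $U(g)\rho U^{\dagger}(g)=\sum_{n,n'}e^{i(\theta_{g,n}-\theta_{g,n'})}\Pi_n\rho\Pi_{n'}=\sum_n \Pi_n\rho\Pi_n=\rho$ for all $g$, i.e.\ $\rho$ is $G$-invariant. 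This closes the equivalence for $\mathcal{C}$.

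Finally, to transfer the conclusion to $\mathcal{L}$, I would observe (as the text already does when introducing $\tilde{K}^{\mathcal{C}}$ and $\tilde{K}^{\mathcal{L}}$ in Eqs.~(\ref{eq:CK-B1})--(\ref{eq:LK-B1})) that in the Abelian case $\mathcal{L}$ differs from $\mathcal{C}$ only by the local relabeling $|n,\lambda;n\rangle\otimes|n\rangle\mapsto|n;n\rangle\otimes|n;\lambda\rangle$, which is a bipartite isometry taking product states to product states (and entangled to entangled). Hence $\mathcal{C}(\rho)$ is separable if and only if $\mathcal{L}(\rho)$ is separable, and the equivalence just proved for $\mathcal{C}$ immediately yields the equivalence for $\mathcal{L}$. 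There is no real obstacle in this argument; the only substantive ingredient is the one-dimensionality of the irreps, which is precisely what makes a \emph{single} isometry suffice in the Abelian case, in contrast to the finite family $\{\mathcal{C}_s\}$ required in Proposition~\ref{prop:covent} for non-Abelian groups.
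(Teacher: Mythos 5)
Your proposal is correct and follows essentially the same route as the paper: separability of $\mathcal{C}(\rho)$ is reduced via Proposition~\ref{prop:gen} to the absence of coherence in $n$ (i.e.\ $[\rho,\Pi_n]=0$), and the one-dimensionality of Abelian irreps, $U(g)=\sum_n e^{i\theta_{g,n}}\Pi_n$, then identifies this with $G$-invariance, while the passage to $\mathcal{L}$ uses the same local relabeling $|n,\lambda;n\rangle\otimes|n\rangle\mapsto|n;n\rangle\otimes|n;\lambda\rangle$ that the paper invokes. You merely make explicit (via the commutator computation and the twirling/dephasing equivalence) what the paper states informally, so no substantive difference or gap remains.
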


Finally, as a corollary we note that the `average state', ~$\bar{\sigma}$ of Eq.~(\ref{eq:sigmabar}), is always a separable state and has no entanglement.

\end{document}